\tikzset{
	frame/.style={
		rectangle, draw, 
		text width=5em, text centered,
		minimum height=4em,drop shadow,fill=white!40,
		rounded corners,
	},
	line/.style={
		draw, -latex', 
	}
}
\newtheorem{definition}{Definition}
\newtheorem{theorem}{Theorem}
\newtheorem{corollary}{Corollary}
\newtheorem{example}{Example}
\newcommand \redhighlight[1]{\ifthenelse{\boolean{HIGHCOMM}}{\textcolor{red}{#1}}{#1}}
\newcommand \bhl[1]{\ifthenelse{\boolean{HIGHCOMM}}{\textcolor{blue}{#1}}{#1}}
\newcommand \yhl[1]{\ifthenelse{\boolean{HIGHCOMM}}{\textcolor{blue}{#1}}{#1}}
\begin{document}
\title{Formal Requirement Elicitation and Debugging for Testing and Verification of \\Cyber-Physical Systems}
\author{
\IEEEauthorblockN{Adel Dokhanchi$^{*}$, Bardh Hoxha$^{\dagger}$, and Georgios Fainekos$^{*}$}
$^{*}$School of Computing, Informatics and
Decision Systems \\Arizona State University, Tempe, AZ, U.S.A.\\
Email: \{adokhanc,fainekos\}@asu.edu\\
$^{\dagger}$Department of Computer Science, \\ 
Southern Illinois University, Carbondale, IL, U.S.A. \\
Email: bhoxha@cs.siu.edu
}

\maketitle
\thispagestyle{empty}
\pagestyle{empty}

\begin{abstract}
A framework for the elicitation and debugging of formal specifications for Cyber-Physical Systems is presented. 
The elicitation of specifications is handled through a graphical interface. 
Two debugging algorithms are presented. 
The first checks for erroneous or incomplete temporal logic specifications without considering the system. 
The second can be utilized for the analysis of reactive requirements with respect to system test traces. 
The specification debugging framework is applied on a number of formal specifications collected through a user study. The user study establishes that requirement errors are common and that the debugging framework can resolve many insidious specification errors\footnote{This is the Extended Technical Report of the following ACM-TECS journal paper \cite{DokhanchiHF18} with minor updates in Tables \ref{tab:ATreqs} and \ref{tab:AFreqs}.}.
\end{abstract}

\IEEEpeerreviewmaketitle

\section{Introduction}
Testing and verification of Cyber-Physical Systems (CPS) is important due to the safety critical applications of CPS such as medical devices and transportation systems.
It has been shown that utilizing formal specifications can lead to improved testing and verification \cite{Fainekos2012,JinEtAl13hscc,yang2012querying,KapinskiDJIB15}. 
However, developing formal specifications using logics is a challenging and error prone task even for experts who have formal mathematical training.
Therefore, in practice, system engineers usually define specifications in natural language. 
Natural language is convenient to use in many stages of system development, but its inherent ambiguity, inaccuracy and inconsistency make it unsuitable for use in defining specifications. 

To assist in the elicitation of formal specifications, in \cite{hoxhatowards,Hoxha_ViSpecIROS15}, we presented a graphical formalism and \yhl{the corresponding} tool \textsc{ViSpec} that can be utilized by users in \yhl{both academia and industry}. 
Namely, a user-developed graphical input is translated to a Metric Interval Temporal Logic (MITL) formula. 
The formal specifications in MITL can be used for testing and verification with tools such as \staliro \cite{Annapureddy2011} and Breach \cite{Donze2010}.

In \cite{Hoxha_ViSpecIROS15}, the tool was evaluated through a usability study which showed that \textsc{ViSpec} users were able to use the tool to elicit formal specifications. 
The usability study results also indicated that in a few cases the developed specifications were incorrect. 
This raised two questions. 
First, are these issues artifacts of the graphical user interface? 
Second, can we automatically detect and report issues with the requirements themselves?

We have created an on-line survey\footnote{The on-line anonymous survey is available through:  \url{http://goo.gl/forms/YW0reiDtgi}} to answer the first question. Namely, we conducted a usability study on MITL by targeting users with working knowledge in temporal logics. In our on-line survey, we tested how well formal method users can translate natural requirements to MITL.
That is, given a set of requirements in natural language, users were asked to formalize the requirements in MITL.
The study is ongoing but preliminary results indicate that even users with working knowledge of MITL can make errors in their specifications. 

For example, for the natural language specification \textit{``At some time in the first 30 seconds, the vehicle speed (v) will go over 100 and stay above 100 for 20 seconds"}, the specification $\varphi = \Diamond_{[0,30]} ((v>100) \Rightarrow \Box_{[0,20]} (v>100))$ was provided as an answer by a user with formal logic background.
Here, $\Diamond_{[0,30]}$ stands for ``{\it eventually within 30 time units}" and $\Box_{[0,20]}$ for ``{\it always from 0 to 20 time units}".
However, the specification $\varphi$ is a {\it tautology}!, i.e. it evaluates to true no matter what the system behavior is and, thus, the requirement $\varphi$ is invalid.
This is because, if at some time $t$ between 0 and 30 seconds the predicate $(v > 100)$ is false, then the implication ($\Rightarrow$) will trivially evaluate to true at time $t$ and, thus, $\varphi$ will evaluate to true as well. 
On the other hand, if the predicate $(v > 100)$ is true for all time between 0 and 30 seconds, then the subformula $\Box_{[0,20]} (v>100)$ will be true at all time between 0 and 10 seconds. 
This means that the subformula $(v>100) \Rightarrow \Box_{[0,20]} (v>100)$ is true at all time between 0 and 10 seconds.
Thus, again, $\varphi$ evaluates to true, which means that $\varphi$ is a tautology. 

This implies that specification issues are not necessarily artifacts of the graphical user interface and that they can happen even for users who are familiar with temporal logics.
Hence, specification elicitation can potentially become an issue as formal and semi-formal \yhl{testing and verification methods and tools are being adopted by industry}. 
This is because specification elicitation can be \yhl{performed} by untrained users. 
Therefore, effort can be wasted in checking incorrect requirements, or even worse, the system can pass the incorrect requirements. 
Clearly, this can lead to a false sense of system correctness, which 
leads us to the second question: What can be done \yhl{in an automated way} to prevent specification errors in CPS? 

In this work, we have developed a specification debugging framework to assist in the elicitation of formal requirements. 
The specification debugging algorithm identifies some of the logical issues in the specifications, but not all of them. 
Namely, it performs the following: 

\begin{enumerate}
\item Validity detection: the specification is unsatisfiable or a tautology. 
\item Redundancy detection: the formula has redundant conjuncts. 
\item Vacuity detection: some subformulas do not affect the satisfiability of the formula. 
\end{enumerate}
Redundancy and vacuity issues usually indicate some misunderstanding in the requirements. 
As a result, a wide class of specification errors in the elicitation process can be corrected before any test and verification process is initiated. However, some specification issues cannot be detected unless we consider the system, and test the system behaviors with respect to the specification. 
We provide algorithms to detect specification vacuity with respect to system traces in order to help the CPS developer find more vacuity issues during system testing. Our framework can help developers correct their specifications as well as finding more subtle errors during testing.

This paper is an extended version of the conference paper that appeared in MEMOCODE 2015 \cite{DokhanchiHF15}.
 
\noindent \textbf{Summary of Contributions:}
\begin{enumerate}
\item We present a specification debugging algorithm for a fragment of MITL \cite{AlurFH96} specifications.
\item Using (1) we provide a debugging algorithm for Signal Temporal Logic Specifications \cite{Maler2004}.
\item We extend Linear Temporal Logic (LTL) \cite{mc_gf:CGP99} vacuity detection algorithms \cite{ChocklerS09} to real-time specifications in MITL.
\item We formally define signal vacuity and we provide an algorithm to detect the system traces that vacuously satisfy the real-time specifications in MITL.
\item We present experimental results on specifications that typically appear in requirements for CPS.
\end{enumerate}
The above contributions can help us address and solve some of the logical issues which may be encountered when writing MITL specifications. In particular, we believe that our framework will primarily help users with minimal training in formal requirements who use graphical specification formalisms like \textsc{ViSpec} \cite{Hoxha_ViSpecIROS15}. The users of \textsc{ViSpec} can benefit from our feedback and fix any reported issues. In addition, we can detect potential system-level issues using algorithms to determine specification vacuity with respect to system traces during testing.

In this paper, the new results over the conference version of the paper \cite{DokhanchiHF15} concern item (4) in the list above and are presented in Sections \ref{svc}, \ref{ltlsat}, \ref{afd}, and Appendix \ref{app:SV}.
In addition, we have expanded some examples and added further experimental results which did not appear in \cite{DokhanchiHF15}. 
Furthermore, we added new algorithms in Section 4.

\section{Related works}
\label{related}
The challenge of developing formal specifications has been studied in the past. 
The most relevant works appear in \cite{autili2007graphical} and \cite{zhang2010timed}. 
In \cite{autili2007graphical}, the authors extend Message Sequence Charts and UML 2.0 Interaction Sequence Diagrams to propose a scenario based formalism called Property Sequence Chart (PSC). 
The formalism is mainly developed for specifications on concurrent systems. 
In \cite{zhang2010timed}, PSC is extended to Timed PSC which enables the addition of timing constructs to specifications. 
Another non-graphical approach to the specification elicitation problem utilizes specification patterns \cite{Dwyer1998PSP}. The patterns provided include commonly used abstractions for different logics including LTL, Computation Tree Logic (CTL), or Quantified Regular Expressions (QRE). This work was extended to the real-time domain, \cite{Konrad2005RSP}.

Specification debugging can also be considered in areas such as system synthesis \cite{RamanK11} and software verification \cite{AmmonsMBL03}. 
In system synthesis, realizability is an important factor, which checks whether the system is implementable given the constraints (environment) and requirements (specification) \cite{EhlersR14,KonighoferHB13,CimattiRST08,RamanK11}. 
Specification debugging can also be considered with respect to the environment for robot motion planning. 
In \cite{Fainekos11,KimFS12}, the authors considered the problem where the original specification is unsatisfiable with the given environment and robot actions. 
Then, they relax the specification in order to render it satisfiable in the given domain. 

One of the most powerful verification methods is model checking \cite{mc_gf:CGP99} where a finite state model of the system is evaluated with respect to a specification. 
For example, let us consider model checking with respect to the LTL formula $\varphi=\Box(req\Rightarrow\Diamond ack)$ which represents the following Request-Response requirement ``if at any time in the future a {\it request} happens, then from that moment on an {\it acknowledge} must eventually happen''. 
Here, $\varphi$ can be trivially satisfied in all systems in which a {\it request} never happens.
In other words, if the {\it request} never happens in the model of the system (let's say \yhl{due to a modeling error}), our goal for checking the reaction of the system (issuing the {\it acknowledge}) is not achieved.
Thus, the model satisfies the specification but not in the intended way.
This may hide actual problems in the model.

Such satisfactions are called {\it vacuous} satisfactions.
Antecedent failure was the first problem that raised vacuity as a serious issue in verification \cite{BeattyB94,BenDavidCFR15}. 
Vacuity can be addressed with respect to a model \cite{BeerBER01,KupfermanV03} or without a model \cite{FismanKSV08,ChocklerS09}. 
A formula which has a subformula that does not affect the overall satisfaction of the formula is a vacuous formula.
It has been proven in \cite{FismanKSV08} that a specification $\varphi$ is satisfied vacuously in all systems that satisfy it iff $\varphi$ is equivalent to some mutations of it. 
In \cite{ChocklerS09}, they provide an algorithmic approach to detecting vacuity and redundancy in LTL specifications. 
Vacuity with respect to testing was considered in \cite{BallK08}.
The authors in \cite{BallK08} defined {\it weak vacuity} for test suites that vacuously pass LTL monitors, e.g., \cite{HavelundR04}. 
The main idea behind the work in \cite{BallK08} is that some transitions are removed from the LTL specification automata in order to find vacuous passes during testing.
The authors in \cite{BallK08} renamed the vacuity in model checking as {\it strong vacuity}.
The authors in \cite{PostHP11} consider the problem of vacuity detection in the set of requirements formalized in Duration Calculus \cite{Meyer2008}.

Our work extends \cite{ChocklerS09} and it is applied to a fragment of MITL. 
We provide a new definition of vacuity with respect to Boolean or real-valued signals. To the best of our knowledge, vacuity of real-time properties such as MITL has not been addressed yet.
Although this problem is computationally hard, in practice, the computation problem is manageable, due to the small size of the formulas.
\section{Preliminaries}

In this work, we take a general approach in modeling Cyber-Physical Systems (CPS). 
In the following,  $\Re$ is the set of real numbers, $\Re_+$ is the set of non-negative real numbers, $\Qe$ is the set of rational numbers, $\Qe_+$ is the set of non-negative rational numbers.
Given two sets $A$ and $B$, $B^A$ is the set of all functions from $A$ to $B$, i.e., for any $f \in B^A$ we have $f : A \rightarrow B$. We define $2^A$ to be the power set of set A. Since we primarily deal with bounded time signals, we fix the variable $T \in \Re_+$ to denote the maximum time of a signal.

\subsection{Metric Interval Temporal Logic}

Metric Temporal Logic (MTL) was introduced in \cite{Koymans1990} in order to reason about the quantitative timing properties of boolean signals.
Metric Interval Temporal Logic (MITL) is MTL where the timing constraints are not allowed to be singleton sets \cite{AlurFH96}. 
In the rest of the paper, we restrict our focus to a fragment of MITL called Bounded-MITL($\Diamond$,$\Box$) where the only temporal operators allowed are \textit{Eventually} ($\Diamond$) and \textit{Always} ($\Box$) operators with timing intervals.
Formally, the syntax of Bounded-MITL($\Diamond$,$\Box$) is defined by the following grammar:

\begin{definition}[Bounded-MITL($\Diamond$,$\Box$) syntax]
\label{def:syn}
\end{definition} 
$\phi\;::=\;\top \; | \; \bot \; | \; a \; | \; \neg a \; | \; \phi _1 \wedge \phi_2\; | \; \phi _1 \vee \phi_2 \; | \; \Diamond_I\phi_1 \; | \; \Box_I\phi _1 $\\
\noindent where $a \in AP$, $AP$ is the set of atomic propositions, $\top$ is True, $\bot$ is False. Also, $I$ is a nonsingular interval over $\Qe_+$ with defined end-points. The interval $I$ is right-closed. 
We interpret MITL semantics over timed traces. 
A timed trace is a mapping from the bounded real line to sets of atomic propositions ($\tss : [0,T] \rightarrow 2^{AP}$). We assume that the traces satisfy the finite variability condition (non-Zeno condition)\footnote{The satisfiability tools for MITL that we use in Section \ref{MITLSAT}, \yhl{assume} that the traces satisfy the finite variability condition \cite{BersaniRP16}.}.

\begin{definition}[Bounded-MITL($\Diamond$,$\Box$) semantics in Negation Normal Form (NNF)]
\label{def:mitl}
Given a timed trace $\tss : [0,T] \rightarrow 2^{AP}$  and $t,t' \in [0,T]$, and an MITL formula $\phi$, the satisfaction relation $(\tss,t) \vDash \phi$ is inductively defined as: 
\begin{itemize}
\item[] $(\tss,t) \vDash \top$
\item[] $(\tss,t) \vDash a$ iff $a \in \tss(t)$
\item[] $(\tss,t) \vDash \neg a$ iff $a \not \in \tss(t)$
\item[] $(\tss,t) \vDash  \varphi_1 \wedge \varphi_2$ iff $(\tss,t) \vDash \varphi_1$ and  $(\tss,t) \vDash \varphi_2$
\item[] $(\tss,t) \vDash  \varphi_1 \vee \varphi_2$ iff $(\tss,t) \vDash \varphi_1$ or  $(\tss,t) \vDash \varphi_2$
\item[] $(\tss,t) \vDash  \Diamond_I\varphi_1$ iff $\exists t' \in (t + I) \cap [0,T]$ s.t $(\tss,t') \vDash  \varphi_1$.
\item[] $(\tss,t) \vDash  \Box_I\varphi_1$ iff $\forall t' \in (t + I) \cap [0,T]$,  $(\tss,t') \vDash \varphi_1$.
\end{itemize}
\end{definition} 
Given an interval $I=[l,u]$, $(t + I)$ creates a new interval $I'$ where $I'=[l+t,u+t]$. A timed trace $\tss$ satisfies a Bounded-MITL($\Diamond$,$\Box$) formula $\phi$ (denoted by $\tss \vDash \phi$), iff $(\tss,0) \vDash \phi$. False is defined as $\bot\equiv\neg\top$.
In this paper, we assume that Bounded-MITL($\Diamond$,$\Box$) formula is in Negation Normal Form (NNF)\footnote{We relax this assumption for addressing the Request-Response specifications (see Section \ref{sec:AF}).} where the negation operation is only applied on atomic propositions. NNF is easily obtainable by applying DeMorgan's Law, i.e $\neg\Diamond_I\varphi\equiv\Box_I\neg\varphi$ and $\neg\Box_I\varphi\equiv\Diamond_I\neg\varphi$. 
Any {\it Implication} operation ($\Rightarrow$) will be rewritten as $\psi\Rightarrow\varphi\equiv\psi'\vee\varphi$, where $\psi'\equiv\neg\psi$ and also $\psi'$ is in NNF, and NNF formulas, only contain the following boolean operators of ($\wedge,\vee$).
For simplifying the presentation, when we mention MITL, we mean Bounded-MITL($\Diamond$,$\Box$).
 Given MITL formulas $\varphi$ and $\psi$, $\varphi$ satisfies $\psi$, denoted by $\varphi\models\psi$ iff $\forall \tss.\tss\models\varphi \implies \tss\models\psi$. 
Throughout this paper, we use $\varphi\in\psi$ to denote that $\varphi$ is a {\bf subformula} of $\psi$.
\subsection{Signal Temporal Logic}

The logic and semantics of MITL can be extended to real-valued signals through Signal Temporal Logic (STL) \cite{Maler2004}. 

\begin{definition}[Signal Temporal Logic \cite{Maler2004}]  Let $s : [0,T] \rightarrow \Re^m$ be a real-valued signal, and $\Pi=\{\pi_1,...,\pi_n\}$ be a collection of predicates or boolean functions of the form $\pi_i : \Re^m \rightarrow \mathbb{B}$ where $\mathbb{B}=\{\top,\bot\}$ is a boolean value. 
\end{definition}
For any STL formula $\Phi_{STL}$ over predicates $\Pi$, we can define a corresponding MITL formula $\Phi_{MITL}$ over some atomic propositions $AP$ as follows: 
\begin{enumerate}
\item Define a set $AP$ such that for each $\pi\in \Pi$, there exist some $a_{\pi}\in AP$
\item For each real-valued signal $s$  we define a $\tss$ such that $\forall t.a_{\pi}\in\tss(t)$ iff $\pi(s(t))=\top$
\item $\forall t.(s,t)\vDash\Phi_{STL}$ iff $(\tss,t)\vDash\Phi_{MITL}$
\end{enumerate}
The traces resulting from abstractions through predicates of signals from physical systems satisfy the finite variability assumption. For practical applications, the finite variability assumption is satisfied.
Since our paper focuses on CPS, with the abuse of terminology, we may use signal to refer to both timed traces and signals.
\subsection{Visual Specification Tool}

The Visual Specification Tool (\textsc{ViSpec}) \cite{Hoxha_ViSpecIROS15} enables the development of formal specifications for CPS. 
The graphical formalism enables reasoning on both timing and event sequence occurrence. Consider the specification $\phi_{cps} = \Box_{[0,30]}((speed > 100) \Rightarrow \Box_{[0,40]}(rpm > 4000))$. 
It states that whenever within the first 30 seconds, \textit{vehicle speed} goes over 100, then from that moment on, the \textit{engine speed (rpm)}, for the next 40 seconds, should always be above 4000. 
Here, both the sequence and timing of the events are of critical importance. See Fig. \ref{fig:spec_cps} for the visual representation of $\phi_{cps}$.

Users develop specifications using a visual formalism which can be translated to an MITL formula. 
The set of specifications that can be generated from this graphical formalism is a proper subset of the set of MITL specifications. 
Fig. \ref{fig:grammar} represents the grammar that produces the set of formulas that can be expressed by the \textsc{ViSpec} graphical formalism. 
In Fig. \ref{fig:grammar}, $p$ is an atomic proposition. 
In the tool, the atomic propositions are automatically derived from graphical templates. For example the formula $\Box_\Ic\Diamond_\Ic p$ can be generated using the following parse tree S$\longrightarrow$T$\longrightarrow$C$\longrightarrow\Box_\Ic\Diamond_\Ic$D$\longrightarrow\Box_\Ic\Diamond_\Ic p$.
\textsc{ViSpec} provides a variety of templates and the connections between them, which allow the users to express a wide collection of specifications as presented in Table \ref{tab:specClasses}. For more detailed description of \textsc{ViSpec}, refer to \cite{Hoxha_ViSpecIROS15}.
\begin{figure}
	\noindent
	\begin{minipage}{.50\linewidth}
	\vspace{-10pt} 
\centering
 	\includegraphics[width=6cm]{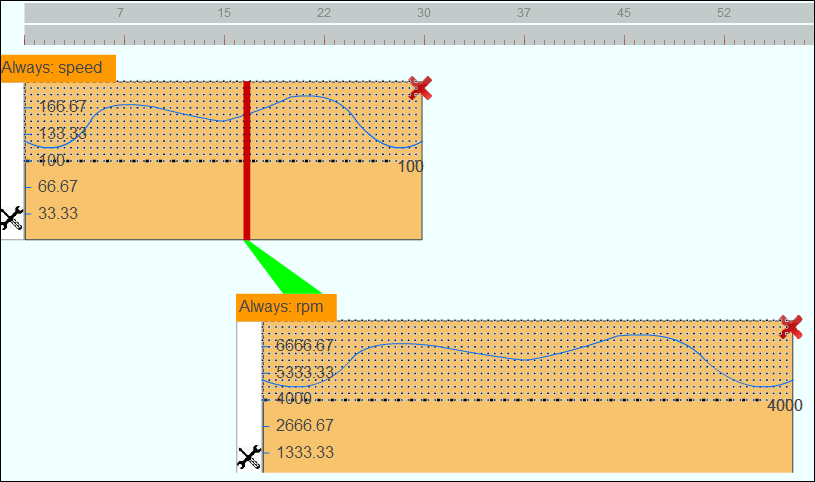}
	 \ifthenelse{\boolean{ARXIV}}{\caption{Graphical representation of  $\phi_{cps}=\Box_{[0,30]}((speed>100)\Rightarrow\Box_{[0,40]}(rpm>4000))$}}{\caption{Graphical representation of \\ $\phi_{cps}=\Box_{[0,30]}((speed>100)\Rightarrow\Box_{[0,40]}(rpm>4000))$}}
	\label{fig:spec_cps}
	\end{minipage}
	\hspace{.03\linewidth}
	\begin{minipage}{.45\linewidth}
		\vspace{-10pt}
	\begin{itemize}
	\item[]
	\item[] S $\longrightarrow$ $\neg$T $|$ T
	\item[] T $\longrightarrow$ A $|$ B $|$ C
	\item[] A $\longrightarrow$ P $|$ (P$\wedge$A) $|$ (P$\Rightarrow$A) 
	\item[] B $\longrightarrow$ $\Box_\Ic$D $|$ $\Diamond_\Ic$D
	\item[] C $\longrightarrow$ $\Box_\Ic\Diamond_\Ic$D $|$ $\Diamond_\Ic\Box_\Ic$D
	\item[] D $\longrightarrow$ $p$ $|$ ($p$$\Rightarrow$A) $|$ ($p$$\wedge$A) $|$ ($p$$\Rightarrow$B) $|$ ($ p $$\wedge$B)
	\item[] P $\longrightarrow$ $p$ $|$ $\Box_\Ic p$ $|$ $\Diamond_\Ic p$
	\end{itemize}
	\caption{\textsc{ViSpec} grammar to generate MITL}
	\label{fig:grammar}	 
	\end{minipage}
\end{figure}
\begin{table*}[t]
	
	\centering
	\ifthenelse{\boolean{ARXIV}}{\caption{Classes of specifications expressible with the graphical formalism\label{tab:specClasses}}}{
		\tbl{Classes of specifications expressible with the graphical formalism\label{tab:specClasses}}}
	{ 
		\begin{tabular}{p{1.6cm}p{11cm}}
			\toprule
			Specification Class & Explanation \\
			\midrule
			Safety & Specifications of the form $\Box \phi$ used to define specifications where $\phi$ should always be true. \\
			Reachability & Specifications of the form $\Diamond \phi$ used to define specifications where $\phi$ should be true at least once in the future (or now). \\
			Stabilization & Specifications of the form $\Diamond \Box \phi$ used to define specifications that, at least once, $\phi$ should be true and from that point on, stay true. \\
			Oscillation & Specifications of the form $\Box \Diamond \phi$ used to define specifications that, it is always the case, that at some point in the future, $\phi$ repeatedly will become true.  \\
			Implication & Specifications of the form $\phi \Rightarrow \psi$ requires that $\psi$ should hold when $\phi$ is true. \\
			Request-Response & Specifications of the form $\Box(\phi \Rightarrow M\psi)$, where $M$ is temporal operator, used to define an implicative response between two specifications where the timing of $M$ is relative to timing of $\Box$. \\
			Conjunction & Specifications of the form $\phi \wedge \psi$ used to define the  conjunction of two sub-specifications. \\
			Non-strict Sequencing & Specifications of the form $N(\phi \wedge M\psi)$, where $N$ and $M$ are temporal operators, used to define a conjunction between two specifications where the timing of $M$ is relative to timing of $N$. \\
			\bottomrule
		\end{tabular}
	}
\end{table*}

\section{MITL Elicitation Framework}
\label{stl2mitl}
Our framework for elicitation of MITL specifications is presented in Fig. \ref{fig:framework}. Once a specification is developed using \textsc{ViSpec}, it is translated to STL. Then, we create the corresponding MITL formula from STL. Next, the MITL specification is analyzed by the debugging algorithm which returns an alert to the user if the specification has inconsistency or correctness issues. The debugging process is explained in detail in the next section. 

\begin{figure}[t]
	\centering
		\scalebox{0.8}{
			\begin{tikzpicture}[font=\small\sffamily,very thick,node distance = 2.67cm]
			\node [frame] (ViSpec) {\textsc{ViSpec} Tool};
			\node [left of = ViSpec] (userInput) {User Input};
			\node [frame,right of = ViSpec] (mitl) {MITL};
			\node [frame,right of = mitl] (AutoDeb) {Debugging};
			\node [right of = AutoDeb] (Specification) {Specification};
			\node [below of = mitl, above = 0.1cm] (ret) {Revision Necessary};
			\path [line] (userInput.0) |- (ViSpec);
			\path [line] (ViSpec) |- (mitl);
			\path [line] (mitl) |- (AutoDeb);
			\path [line] (AutoDeb) |- (Specification);
			\path [line] (AutoDeb) |- ($(mitl.south west) + (-1.65,-1)$) -- (ViSpec.270); 
			\end{tikzpicture}
		}
	\vspace{-8pt}
	\caption{Specification Elicitation Framework}
	\label{fig:framework}
	\vspace{-7pt}
\end{figure}

To enable the debugging of specifications, we must first project the STL predicate expressions (functions) into atomic propositions with independent truth valuations. 
This is very important because the atomic propositions ($a\in AP$) in MITL are assumed to be independent of each other. 
However, when we project predicates to the atomic propositions, the dependency between the predicates restricts the possible combinations of truth valuations of the atomic propositions. 
This notion of predicate dependency is illustrated using the following example.
Consider the real-valued signal $Speed$ in Fig. \ref{fig:stl2mitl}. The boolean abstraction $a$ (resp. $b$) over the $Speed$ signal is true when the $Speed$ is above 100 (resp. 80). The predicates $a$ and $b$ are related to each other because it is always the case that if $Speed>100$ then also $Speed>80$. 
In Fig. \ref{fig:stl2mitl}, the boolean signals for predicates $a$ and $b$ are represented in black solid and dotted lines, respectively. 
It can be seen that solid and dotted lines are overlapping which shows the dependency between them.
However, this dependency is not captured if we naively substitute each predicate with a unique atomic proposition. 
If we lose information about the intrinsic logical dependency between $a$ and $b$, then the debugging algorithm will not find possible specification issues. 

For analysis of STL formulas within our MITL debugging process, we must replace the original predicate with non-overlapping (mutual exclusive) predicates. For the example illustrated in Fig. \ref{fig:stl2mitl}, we create a new atomic proposition $c$ which corresponds to $100 \geq speed > 80$ and the corresponding boolean signal is represented in gray. In addition, we replace the atomic proposition $b$ with the propositional formula $a\vee c$ since $speed > 80\equiv (speed > 100 \vee 100 \geq speed > 80 )$. 
Now, the dependency between $Speed>100$ and $Speed>80$ can be preserved because it is always the case that if $a$ ($Speed>100$), then $a\vee c$ ($Speed>80$). 
It can be seen in Fig. \ref{fig:stl2mitl} that the signal $b$ (dotted line) is the disjunction of the solid black ($a$) and gray ($c$) signals, where $a$ and $c$ cannot be simultaneously true. 

The projection of STL to MITL with independent atomic propositions is conducted using a brute-force approach that runs through all the combinations of predicate expressions to find overlapping parts. 
The high level overview of Algorithm \ref{alg:GMEP} is as follows: given the set of predicates $\Pi=\{\pi_1,...,\pi_n\}$, the algorithm iteratively calls Algorithm \ref{alg:decPred} (DecPred) in order to identify predicates whose corresponding sets have non-empty intersections. 
For each predicate $\pi_i$, we assume there exists a corresponding set $\Sc_i$ such that $\Sc_i=\{x\; | \;x\in\Re^m,\pi_i(x)=\top\}$.
The set $\Sc_i$ represents part of space $\Re^m$ where predicate function $\pi_i$ evaluates to $\top$.
When no non-empty intersection is found, the Algorithm \ref{alg:GMEP} terminates.

Algorithm \ref{alg:GMEP} creates a temporary copy of $\Pi$ in a new set $\Delta$. 
Then in a while loop Algorithm \ref{alg:decPred} is called in Line 3 to find overlapping predicates.
Algorithm DecPred checks all the combination of predicates in $\Delta$ until it finds overlapping sets (see Line 3).
The DecPred partitions two overlapping predicates $\pi_i,\pi_j$ into three mutually exclusive predicates $\overline{\pi}_{ij1},\overline{\pi}_{ij2},\overline{\pi}_{ij3}$ in Lines 4-6.
Then $\pi_i,\pi_j$ are removed from $\Delta$ in Line 7 and new predicates  $\overline{\pi}_{ij1},\overline{\pi}_{ij2},\overline{\pi}_{ij3}$ are appended to $\Delta$ (Line 8).
If no overlapping predicates are found, then DecPred returns $\emptyset$ in Line 13 and termCond gets value 1 in Line 7 of Algorithm \ref{alg:GMEP}.
Now the while loop will terminate and $\Psi$ contains all non-overlapping predicates.
We must rewrite the predicates in $\Pi$ with a disjunction operation on the new predicates in $\Psi$. 
This operation takes place in Line 11 of Algorithm \ref{alg:GMEP}. 
\yhl{Since the function CreateDisjunction is trivial}, we omit its pseudo code.
The runtime overhead of Algorithm \ref{alg:GMEP} and the size of the resulting set $\Psi$ can be exponential to $|\Pi|=n$, because we can have $2^n$ possible combinations of predicate evaluations.

\begin{figure}[t]
	\vspace{-5pt} 
	 \vspace{-10pt} 
	\begin{center}
		\includegraphics[width=12cm]{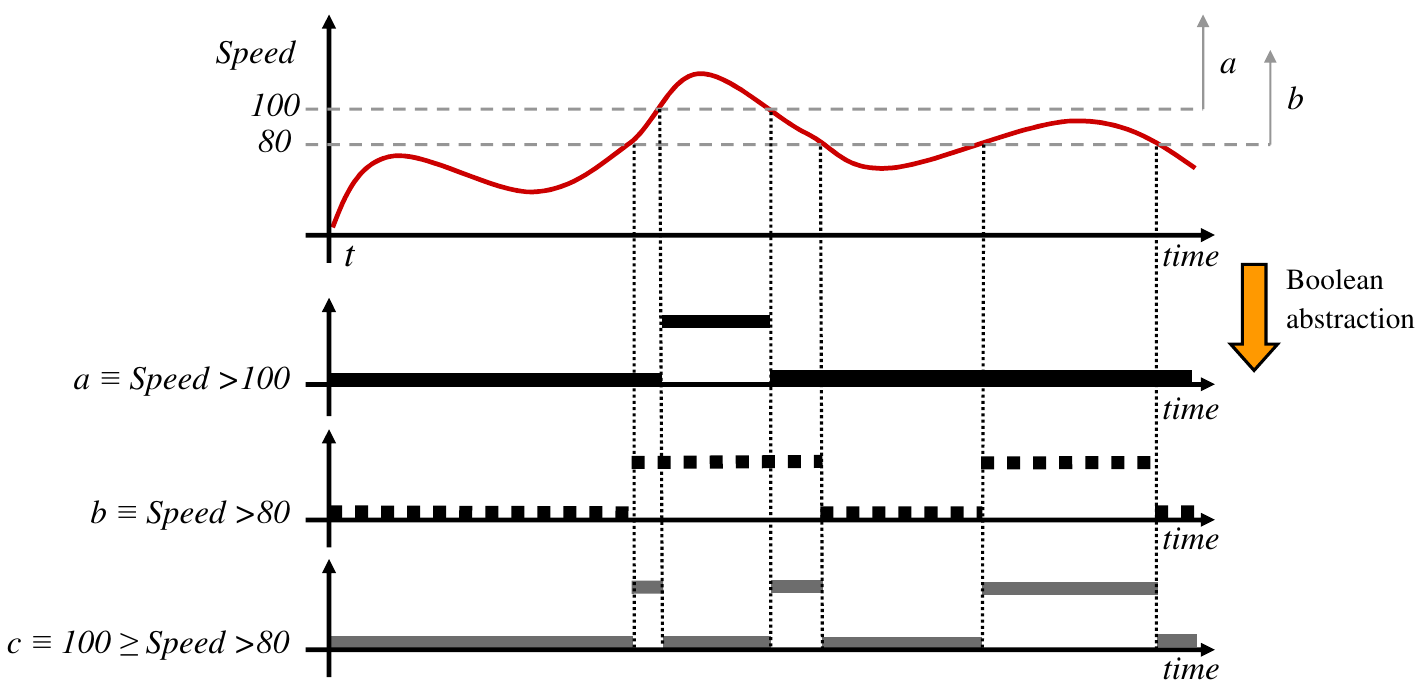}
	\end{center}
	\vspace{-10pt}
	\caption{The real-valued $Speed$ signal and its three boolean abstractions: $a\equiv speed>100$ (solid black line), $b\equiv speed>80$ (dotted line), and $c\equiv100\ge Speed>80$ (gray line).}
	\label{fig:stl2mitl}
\end{figure}

\begin{figure}
	\noindent
	\begin{minipage}{.50\linewidth}
	\vspace{-12pt}
	\begin{algorithm}[H] 
		\caption{ Generate Mutually Exclusive Predicates}
		{{\bf Input}: Set of predicates $\Pi=\{\pi_1,...,\pi_n\}$ \\
			{\bf Output}: Mutually exclusive predicates $\Psi$ \\
			Update $\Pi$ with Disjunction of predicates $\Psi$}
		\label{alg:GMEP}
		\begin{algorithmic}[1]
			\State termCond $\leftarrow$ 0; $\Delta\leftarrow\Pi$
			\While {termCond = 0}
			\State $\Psi\leftarrow$ DecPred($\Delta$)
			\If {$\Psi \neq \emptyset$}
			\State $\Delta\leftarrow \Psi$
			\Else{}
			\State termCond $\leftarrow$ 1
			\State $\Psi \leftarrow \Delta$
			\EndIf
			\EndWhile	
			\State $\Pi\leftarrow$CreateDisjunction$(\Pi,\Psi)$
			\State \Return $\Pi$,$\Psi$
		\end{algorithmic}
	\end{algorithm}
	\end{minipage}
	\vline
	\begin{minipage}{.50\linewidth}
		\vspace{-12pt}
	\begin{algorithm}[H] 
		\caption{DecPred: Decompose Two Predicates}
		{{\bf Input}: Set of predicates $\Pi=\{\pi_1,...,\pi_n\}$ \\
			{\bf Output}: Set of updated predicates $\Pi$ 
			}
		\label{alg:decPred}
		\begin{algorithmic}[1]
			\For {$i=1$ to size of $\Pi$}
			\For {$j=i + 1$ to size of $\Pi$}
			\If {$\Sc_i \cap \Sc_j$ $\neq \emptyset$ }		
			\State$\overline{\pi}_{ij1}\leftarrow\Sc_i \cap \Sc_j$
			\State$\overline{\pi}_{ij2}\leftarrow\Sc_i \setminus \Sc_j$
			\State$\overline{\pi}_{ij3}\leftarrow\Sc_j \setminus \Sc_i$ 
			\State Remove($\Pi,\{\pi_i,\pi_j\}$) 
			\State Append($\Pi,\{\overline{\pi}_{ij1},\overline{\pi}_{ij2},\overline{\pi}_{ij3}$\})
			\State \Return $\Pi$
			\EndIf
			\EndFor
			\EndFor
			\State \Return $\emptyset$
		\end{algorithmic}
	\end{algorithm}
	\end{minipage}
\end{figure}

\section{MITL Specification Debugging}
\label{vacuity}
In the following, we present algorithms that can detect inconsistency and correctness
issues in specifications. This will help the user in the elicitation of correct specifications.
Our specification debugging process conducts the following checks in this order: 1) Validity, 2) Redundancy, and 3) Vacuity. In brief, validity checking determines whether the specification is unsatisfiable or a tautology. Namely, if the specification is unsatisfiable no system can satisfy it and if it is a tautology every system can trivially satisfy it. For example, $p \vee \neg p$ is a tautology. 
If an MITL formula passes the validity checking, this means that the MITL is satisfiable but not a tautology.

Redundancy checking determines whether the specification has any redundant conjunct.
For example, in the specification $p \wedge \Box_{[0,10]} p$, the first conjunct is redundant. Sometimes redundancy is related to incomplete or erroneous requirements where the user may have wanted to specify something else. Therefore, the user should be notified.
Vacuity checking determines whether the specification has a subformula that does not affect on the satisfaction of the specification. For example, $\varphi=p\vee\Diamond_{[0,10]}p$ is vacuous since the first occurrence of $p$ does not affect on the satisfaction of $\varphi$. 
This is a logical issue because a part of the specification is overshadowed by the other components. 

The debugging process is presented in Fig. \ref{fig:debuggingFrm}. 
The feedback (Revision Necessary) to the user is a textual description about the detail of each issue.
First, given a specification, a validity check is conducted. If a formula does not pass the validity check then it means that there is a major problem in the specification and the formula is returned for revision. Therefore, redundancy and vacuity checks are not relevant at that point and the user is notified that the specification is either unsatisfiable or is a tautology.
Similarly, if the specification is redundant it means that it has a conjunct that does not have any effect on the satisfaction of the specification and we return the redundant conjunct to the user for revision. 
Lastly, if the specification is vacuous it is returned with the issue for revision by the user.
When vacuity is detected, we return to the user the simplified formula which is equivalent to the original MITL.

\begin{figure}
	\ifthenelse{\boolean{ARXIV}}{}{\vspace{-10pt}}
	\centering
	\scalebox{0.85}{
		\begin{tikzpicture}[font=\small\sffamily,very thick,node distance = 3cm]
		\node [frame] (ViSpec) {Validity};
		\node [left of = ViSpec,text width=2cm] (userInput) {MITL \\Specification};
		\node [frame,right of = ViSpec] (mitl) {Redundancy};
		\node [frame,right of = mitl] (AutoDeb) {Vacuity};
		\node [right of = AutoDeb,text width=3cm] (Specification) {Specification passed debugging checks};
		\node [below of = mitl, above = 0.1cm] (ret) {Revision Necessary};
		\node [frame, below of = userInput, above = 0.5cm] (return) {\textsc{ViSpec} Tool};
		\path [line] (userInput.0) |- (ViSpec);
		\path [line] (ViSpec) |- (mitl);
		\path [line] (mitl) |- (AutoDeb);
		\path [line] (AutoDeb) |- (Specification);
		\path [line] (AutoDeb) |- ($(mitl.south west) + (-3.9,-1)$); 
		\path [line] (mitl) |- ($(mitl.south west) + (-3.9,-1)$);
		\path [line] (ViSpec) |- ($(mitl.south west) + (-3.9,-1)$);
		\end{tikzpicture}
	}
	\vspace{-5pt}
	\caption{Specification Debugging}
	\label{fig:debuggingFrm}
	\vspace{-5pt}
\end{figure}

\subsection{Redundancy Checking}
Recall that a specification has a redundancy issue if one of its conjuncts can be removed without affecting the models of the specification. Before we formally present what redundant requirements are, we have to introduce some notation. We consider specification $\Phi$ as a conjunction of MITL subformulas ($\varphi_j$):
\begin{equation} \label{eq:conjunction}
\Phi=\sideset{}{_{j=1}^{k}}\bigwedge\varphi_j
\end{equation}
To simplify discussion, we will abuse notation and we will associate a conjunctive formula with the set of its conjuncts. That is:
\begin{equation} \label{eq:conjset}
\Phi=\{\varphi_j\; | \;j=1,...,k\}\equiv\varphi_1\cup\varphi_2\cup\dots\cup\varphi_k
\end{equation}
Similarly, $\{\Phi\backslash\varphi_i\}$ represents the specification $\Phi$ where the conjunct $\varphi_i$ is removed:
\ifthenelse{\boolean{ARXIV}}{
\begin{equation}\label{eq:RemoveConj}
\{\Phi\backslash\varphi_i\}=\{\varphi_j\; | \;j=1,...,i-1,i+1,...,k\}=
\sideset{}{_{j=1}^{i-1}}\bigwedge\varphi_j\wedge\sideset{}{_{j=i+1}^{k}}\bigwedge\varphi_j
\end{equation}
	}{
\begin{multline}\label{eq:RemoveConj}
\{\Phi\backslash\varphi_i\}=\{\varphi_j\; | \;j=1,...,i-1,i+1,...,k\}=
\sideset{}{_{j=1}^{i-1}}\bigwedge\varphi_j\wedge\sideset{}{_{j=i+1}^{k}}\bigwedge\varphi_j
\end{multline}
}
Therefore $\{\Phi\backslash\varphi_i\}$ represents a conjunctive formula.
Redundancy in specifications \yhl{can appear} in practice due to the incremental approach that system engineers take in the development of specifications. 
Redundancy should be avoided in formal specifications because it increases the overhead of the testing and verification processes. In addition, redundancy can be the result of incorrect translation from \yhl{natural language requirements.}
In the following, we consider the redundancy removal algorithm provided in \cite{ChocklerS09} for LTL formulas and we extend it to support MITL formulas.

\begin{definition}[Redundancy of Specification] 
A conjunct $\varphi_i$ is redundant with respect to $\Phi$ if 
\[ \underset{\psi\in\{\Phi\backslash\varphi_i\}}{\bigwedge\psi}\models \varphi_i\]
\end{definition}
 
To reformulate, $\varphi_i$ is redundant with respect to $\Phi$ if $\{\Phi\backslash\varphi_i\}\models\varphi_i$.
For example, in $\Phi=\Diamond_{[0,10]}(p\wedge q)\wedge\Diamond_{[0,10]} p\wedge\Box_{[0,10]} q$, the conjunct $\Diamond_{[0,10]}(p\wedge q)$ is redundant with respect to $\Diamond_{[0,10]} p\wedge\Box_{[0,10]} q$ since $\Diamond_{[0,10]} p\wedge\Box_{[0,10]} q\models \Diamond_{[0,10]}(p\wedge q)$. 
In addition, $\Diamond_{[0,10]} p$ is redundant with respect to $\Diamond_{[0,10]}(p\wedge q)\wedge\Box_{[0,10]} q$ since $\Diamond_{[0,10]}(p\wedge q)\wedge\Box_{[0,10]} q\models\Diamond_{[0,10]} p$. This method can \yhl{detect both issues} and report them to the user.
Algorithm \ref{alg:redun} finds redundant conjuncts in the conjunction operation of the following levels:
\begin{enumerate}
	\item Conjunction as the root formula (top level).
	\item Conjunction in the nested subformulas (lower levels).
\end{enumerate}	
In the top level, it provides the list of subformulas that are redundant with respect to the original MITL $\Phi$. 
In the lower levels, if a specification has nested conjunctive subformulas ($\phi_i\in\Phi$), it will return the conjunctive subformula $\phi_i$ as well as its redundant conjunct $\psi_j\in\phi_i$.
For example, if $\Phi=\Diamond_{[0,10]}(p \wedge \Box_{[0,10]} p)$ is checked by Algorithm \ref{alg:redun}, then it will return the pair of $(p,p \wedge \Box_{[0,10]} p)$ to represent that $p$ is redundant in $p \wedge \Box_{[0,10]} p$. 
In Line 5, the pair of $(\psi_j,\phi_i)$ is interpreted as follows: $\psi_j$ is redundant in $\phi_i$.
\begin{figure}
\noindent

\begin{minipage}{.53\linewidth}
	\vspace{-12pt}
	
	\begin{algorithm}[H]
		\fontsize{9}{9}
	\caption{Redundancy Checking}
	{\bf Input}: $\Phi$ ($MITL$ Specification)\\
	{\bf Output}: $R_\varphi$
	(redundant conjuncts w.r.t.\\ conjunctions)
	\label{alg:redun}
	\begin{algorithmic}[1]
		\State $R_\varphi\gets \emptyset$
		\For {each conjunctive subformula $\phi_i\in\Phi$}
		\For {each conjunct $\psi_j\in\phi_i$}
		\If {$\{\phi_i\backslash\psi_j\}\models\psi_j$}
		\State $R_\varphi\gets R_\varphi\cup(\psi_j,\phi_i)$
		\EndIf
		\EndFor
		\EndFor
		\State \Return $R_\varphi$
	\end{algorithmic}
	\end{algorithm}
\end{minipage}
\vline
\begin{minipage}{.47\linewidth}
	\vspace{-12pt}
	\begin{algorithm}[H]
		\fontsize{9}{9}
		\caption{ Vacuity Checking}
		{\bf Input}: $\Phi$ ($MITL$ Specification)\\
		{\bf Output}: $V_\varphi$ (vacuous formulas)\\
		\label{alg:vacui}
		\begin{algorithmic}[1]
			\State $V_\varphi\gets \emptyset$
			\For {each formula $\varphi_i\in\Phi$}
			\For {each $l\in litOccur(\varphi_i)$}
			\If{$\Phi\models\varphi_i[l\leftarrow\perp]$}
			\State $V_\varphi\gets V_\varphi\cup\{\Phi\backslash\varphi_i\}\wedge\varphi_i[l\leftarrow\perp]$
			\EndIf
			\EndFor
			\EndFor
			\State \Return $V_\varphi$
		\end{algorithmic}
	\end{algorithm}
\end{minipage}
\end{figure}

\subsection{Specification Vacuity Checking}
Vacuity detection is used to ensure that all the subformulas of the specification contribute to the satisfaction of the specification.
In other words, vacuity check enables the detection of irrelevant subformulas in the specifications \cite{ChocklerS09}.
For example, consider the STL specification $\phi_{stl} = \Diamond_{[0,10]} ( ( speed > 100 )  \vee \Diamond_{[0,10]}(speed > 80))$. 
In this case, the subformula $( speed > 100 )$ does not affect the satisfaction of the specification. 
This indicates that $\phi_{stl}$ is a vacuous specification. 
We need to create correct atomic propositions for the predicate expressions of $\phi_{stl}$ to be able to detect such vacuity issues in MITL formulas. 
If we naively replace the predicate expressions $speed > 100$ and $speed > 80$ with the atomic propositions $a$ and $b$, respectively, then the resulting MITL formula will be $\phi_{mitl} = \Diamond_{[0,10]} ( a  \vee \Diamond_{[0,10]}b)$. However, $\phi_{mitl}$ is not vacuous. Therefore, we must extract non-overlapping predicates as explained in Section \ref{stl2mitl}. The new specification $\phi_{mitl}' = \Diamond_{[0,10]} (a \vee \Diamond_{[0,10]} (a \vee c))$ where $a$ corresponds to $speed > 100$ and $c$ corresponds to $100 \geq speed > 80$ is the correct MITL formula corresponding to $\phi_{stl}$, and it is vacuous.
In the following, we provide the definition of MITL vacuity with respect to a signal:

\begin{definition}[MITL Vacuity with respect to timed trace] 
Given a timed trace $\tss$ and an MITL formula $\varphi$, a subformula $\psi$ of $\varphi$ does not affect the satisfiability of $\varphi$ with respect to $\tss$ if and only if $\psi$ can be replaced with any subformula $\theta$ without changing the satisfiability of $\varphi$ on $\tss$. 
A specification $\varphi$ is satisfied vacuously by $\tss$, denoted by $\tss\models_V\varphi$, if there exists a subformula $\psi$ which does not affect the satisfiability of $\varphi$ on $\tss$.
\end{definition}

In the following, we extend the framework presented in \cite{ChocklerS09} to support MITL specifications. Let $\varphi$ be a formula in NNF where only predicates can be in the negated form. 
A $literal$ is defined as a predicate or its negation. For a formula $\varphi$, the set of literals of $\varphi$ is denoted by $literal(\varphi)$ and contains all the literals appearing in $\varphi$. For example, if $\varphi=(\neg p\wedge q)\vee\Diamond_{[0,10]} p\vee\Box_{[0,10]} q$, then  $literal(\varphi)=\{\neg p, q, p\}$. Literal occurrences, denoted by $litOccur(\varphi)$, is a multi-set of literals appearing in some order in $\varphi$, e.g., by traversal of the parse tree. For the given example $litOccur(\varphi)=\{\neg p, q, p, q\}$. For each $l\in litOccur(\varphi)$, we create the mutation of $\varphi$ by substituting the occurrence of $l$ with $\perp$. We denote the mutated formula as $\varphi[l\leftarrow\perp]$.
\begin{definition}[$MITL$ Vacuity w.r.t. literal occurrence]
	\label{VaclitOc}
Given a timed trace $\tss$ and an $MITL$ formula $\varphi$ in NNF, specification $\varphi$ is vacuously satisfied by $\tss$ if there exists a literal occurrence $l\in litOccur(\varphi)$ such that $\tss$ satisfies the mutated formula $\varphi[l\leftarrow\perp]$. Formally, $\tss\models_V\varphi$ if  $\exists l \in litOccur(\varphi)$ s.t. $\tss\models\varphi[l\leftarrow\perp]$.
\end{definition}
\begin{theorem}[$MITL$ Inherent Vacuity] 
	\label{thm:vacu}
Assume that the specification $\Phi$ is a conjunction of MITL formulas. If  $\exists\varphi_i\in\Phi$ and $\exists l\in litOccur(\varphi_i)$, such that $\Phi\models\varphi_i[l\leftarrow\perp]$, then $\Phi$ is inherently vacuous.
\end{theorem}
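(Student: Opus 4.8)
The plan is to unfold the (standard) notion of inherent vacuity and reduce the claim to a single literal mutation that works uniformly across all satisfying traces. Following \cite{FismanKSV08}, I read ``$\Phi$ is inherently vacuous'' as: every timed trace that satisfies $\Phi$ satisfies it vacuously, i.e. $\forall\tss.\;\tss\models\Phi\implies\tss\models_V\Phi$. The witness literal occurrence for this vacuity will be exactly the $l\in litOccur(\varphi_i)$ supplied by the hypothesis. Since $\varphi_i$ is a conjunct of $\Phi$, we have $litOccur(\varphi_i)\subseteq litOccur(\Phi)$, so $l$ is also a literal occurrence of $\Phi$ and the mutation $\Phi[l\leftarrow\perp]$ is well defined.

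First I would record the purely syntactic identity
\[
\Phi[l\leftarrow\perp]=\{\Phi\backslash\varphi_i\}\wedge\varphi_i[l\leftarrow\perp],
\]
which holds because $l$ is a single occurrence localized inside the conjunct $\varphi_i$, so substituting $\perp$ for it in the whole conjunction touches only the $\varphi_i$ factor. This is precisely the formula stored in $V_\varphi$ by Algorithm \ref{alg:vacui}, so the theorem is really a soundness statement for the detection condition of that algorithm. Next I would run the trace-by-trace argument: fix any $\tss$ with $\tss\models\Phi$. From $\tss\models\Phi$ I get $\tss\models\{\Phi\backslash\varphi_i\}$ (every conjunct other than $\varphi_i$ holds), and from the hypothesis $\Phi\models\varphi_i[l\leftarrow\perp]$ together with $\tss\models\Phi$ I get $\tss\models\varphi_i[l\leftarrow\perp]$. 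Conjoining the two and using the identity above yields $\tss\models\Phi[l\leftarrow\perp]$. Since $l\in litOccur(\Phi)$, Definition \ref{VaclitOc} gives $\tss\models_V\Phi$. As $\tss$ was an arbitrary model of $\Phi$, this establishes inherent vacuity.

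The proof has no deep obstacle; the only delicate points are bookkeeping. The first is making the substitution identity rigorous, for which I would appeal to the fact that $litOccur$ is a multiset of \emph{occurrences}, so that mutating one occurrence is a local, well-defined operation. The second, which I would include to match the characterization of \cite{FismanKSV08}, is strengthening the conclusion to the equivalence $\Phi\equiv\Phi[l\leftarrow\perp]$. The direction $\Phi\models\Phi[l\leftarrow\perp]$ is exactly the computation of the previous paragraph; the reverse direction $\Phi[l\leftarrow\perp]\models\Phi$ requires a monotonicity lemma, and this is the one place where the NNF assumption is essential. Since $\Phi$ is in NNF (negation only on predicates) and built from $\wedge,\vee,\Diamond_I,\Box_I$, a straightforward structural induction shows $\varphi[l\leftarrow\perp]\models\varphi$ for any such $\varphi$, using $\perp\models l$ at the base case and the monotonicity of each connective and temporal operator in its arguments at the inductive step. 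Combining the two directions then gives $\Phi\equiv\Phi[l\leftarrow\perp]$, exhibiting $\Phi$ as equivalent to a mutation of itself, which is the inherent-vacuity criterion we want.
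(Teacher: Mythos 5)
Your proposal is correct and takes essentially the same route as the paper's own proof: both reduce the claim to the equivalence $\Phi\equiv\{\Phi\backslash\varphi_i\}\wedge\varphi_i[l\leftarrow\perp]$, where the forward direction is immediate from the hypothesis $\Phi\models\varphi_i[l\leftarrow\perp]$ together with $\Phi\models\{\Phi\backslash\varphi_i\}$, and the reverse direction is the same structural induction on NNF formulas showing $\varphi_i[l\leftarrow\perp]\models\varphi_i$, using $\perp\models l$ at the base case and the positive monotonicity of $\wedge,\vee,\Box_I,\Diamond_I$ at the inductive step. Your opening trace-by-trace argument via Definition \ref{VaclitOc} is just a semantic rephrasing of that forward direction, so the mathematical content coincides with the paper's.
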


\bhl{A specification $\Phi$ is {\it inherently vacuous} if it is equivalent to its simplified mutation, which means that $\Phi$ is vacuous independent of any signal or system.
Inherent vacuity of LTL formulas is addressed in \cite{FismanKSV08,ChocklerS09}.}
The proof of \textsc{Theorem} \ref{thm:vacu} is straightforward modification of the proofs given in \cite{ChocklerS09,KupfermanV03}. For completeness in the presentation, we provide the proof in Appendix \ref{app:Thm1}.
When we do not have a root-level conjunction in the specification ($\Phi=\varphi$\footnote{In this case, we assume $\{\Phi\backslash\varphi\}\equiv\top$ in Line 5 of the Algorithm 4.}), we check the vacuity of the formula with respect to itself. In other words, we check whether the specification satisfies its mutation ($\varphi\models\varphi[l\leftarrow\perp]$). 
Technically, Algorithm 4 as presented, returns a list of all the mutated formulas that are equivalent to the original MITL.

\section{Signal Vacuity Checking}
\label{svc}
In the previous section, we addressed specification vacuity without considering the system.
However, in many cases, specification vacuity depends on the system. 
For example, consider the LTL specification $\varphi=\Box(req\Rightarrow\Diamond ack)$. 
The specification $\varphi$ does not have an inherent vacuity issue \cite{FismanKSV08}.
However, if $req$ never happens in any of the behaviors of the system, then the specification $\varphi$ is vacuously satisfied on this specific system.
As a result, it has been argued that it is important to add vacuity detection in the model checking process \cite{BeerBER01,KupfermanV03}.
We encounter the same issue when we test signals and systems with respect to Request-Response STL/MITL specifications. 

\subsection{Vacuous Signals}
\label{sec:AF}
Consider the MITL specification $\varphi=\Box_{[0,5]}(req\Rightarrow\Diamond_{[0,10]}ack)$. This formula will pass the MITL Specification Debugging method presented in Section \ref{vacuity}. However, any timed trace $\mu$ that does not satisfy $req$ at any point in time during the test will vacuously satisfy $\varphi$. 
We refer to timed traces that do not satisfy the antecedent (precondition) of the subformula as {\it vacuous timed traces}. 
Similarly, these issues follow for STL formulas over signals as well.
Consider Task 6 in Table \ref{tab:taskList} with the specification $\psi=\Diamond_{[0,40]}(speed>100)\Rightarrow\Box_{[0,30]}(rpm>3000)$. Any real-valued signal $s$ that does not satisfy $\Diamond_{[0,40]}(speed>100)$ will vacuously satisfy $\psi$. Finding such signals is important in testing and monitoring, since if a signal $s$ does not satisfy the precondition of an STL/MITL specification, then there is no point in considering $s$ as a useful test. 
\begin{definition}[Vacuous Timed Trace (Signal)] 
	\label{def:vacSig}
\bhl{Given an MITL (STL) formula $\varphi$, a timed trace $\tss$ (signal $s$) is vacuous if it satisfies the Antecedent Failure mutation of $\varphi$.}
\end{definition}
\bhl{Antecedent Failure is one of the main sources of vacuity. Antecedent Failure occurs in a Request-Response specification such as $\varphi_{RR}=\Box_{[0,5]}(req\Rightarrow\Diamond_{[0,10]}ack)$.} 
We provide a formula mutation that can detect signal vacuity in Request-Response specifications.
\begin{definition}[Request-Response MITL]
A Request-Response MITL formula $\varphi_{RR}$ is an MITL formula that has one or more implication ($\Rightarrow$) operations in positive polarity (without any negation). In addition, for each implication operation the \bhl{consequent must have a temporal operator at the top-level.}
\end{definition}

In the Request-Response (RR) specifications \cite{HornTW015}, we define sequential events in a specific order (by using the implication operator). 
Many practical specification patterns based on the Request-Response format are provided for system properties \cite{Dwyer1998PSP,Konrad2005RSP}. 
Therefore, we can define a chain of events that the system must respond/react to.
In an RR-specification such as $\varphi_{RR}=\Box_{[0,5]}(req\Rightarrow\Diamond_{[0,10]}ack)$, the temporal
operator for the consequent $\Diamond_{[0,10]}ack$ is necessary, unless the system does not have any time to acknowledge the $req$.
For any trace $\mu$ in which $req$ never happens, we can substitute $ack$ by any formula and the specification is still satisfied by $\mu$. Therefore, if the antecedent is failed by a trace $\mu$, then $\varphi_{RR}$ is vacuously satisfied by $\mu$.
For each implication subformula ($\varphi\Rightarrow\psi$), the left operand ($\varphi$) is the precondition (antecedent) of the implication. An antecedent failure mutation is a new formula that is created with the assertion that the precondition ($\varphi$) never happens. 
Note that RR-specifications should not be translated into NNF.
For each precondition $\varphi$, we create an antecedent failure mutation $\Box_{I_{\varphi}}(\neg\varphi)$ where $I_{\varphi}$ is called the {\it effective interval} of $\varphi$.
	
\begin{definition}[Effective Interval] 	
The effective interval of a subformula is the time interval when the subformula can have an impact on the truth value of the whole MITL (STL) specification.
\end{definition}	
Each subformula is evaluated only in the time window that is provided by the {\it effective interval}.
For example, for \bhl{the} MITL specification $\varphi\wedge\psi$, the effective interval for both $\varphi$ and $\psi$ is [0,0], because $\varphi$ and $\psi$ can change the value of $\varphi\wedge\psi$ only within the interval [0,0]. 
Similarly, for the MITL specification $\Box_{[0,10]}\varphi$, the effective interval of $\varphi$ is [0,10], since the truth value of $\varphi$ is observed in the time window of [0,10] for evaluating $\Box_{[0,10]}\varphi$.
The effective interval is important for the creation of an accurate antecedent failure mutation. This is because the antecedent can affect the truth value of the MITL formula only if it is evaluated in the effective interval. 
The effective interval is like a time window to make the antecedent observable for an outside observer the way it is observed by the MITL specification.
\begin{algorithm}[t]
	\caption{Effective Interval Update EIU($\varphi$,$I$)}
	{\bf Input}:  $\varphi$ (Parse Tree of the MITL formula), $I$ (Effective Interval)\\
	{\bf Output}: $\varphi$ (Updated formula with subformulas annotated with effective intervals)
	\label{alg:AIcom}
	
	\begin{algorithmic}[1]
		\State	$\varphi.EI\gets I$
		\If {$\varphi\equiv\neg\varphi_m$}
		\State EIU($\varphi_m$,$I$)
		\ElsIf {$\varphi\equiv\varphi_m\vee\varphi_n$ OR $\varphi\equiv\varphi_m\wedge\varphi_n$ OR $\varphi\equiv\varphi_m\Rightarrow\varphi_n$ }
		\State EIU($\varphi_m$,$I$)
		\State EIU($\varphi_n$,$I$)
		\ElsIf{$\varphi\equiv\Box_{I'}\varphi_m$ OR $\varphi\equiv\Diamond_{I'}\varphi_m$ }
		\State	  $I''\gets I' \oplus I  $
		\State EIU($\varphi_m$,$I''$)
		\EndIf
		\State \Return $\varphi$
	\end{algorithmic}
\end{algorithm}

The effective interval of MITL formulas can be computed recursively using Algorithm \ref{alg:AIcom}. To run Algorithm \ref{alg:AIcom}, we must process the MITL formula parse tree\footnote{We assume that the MITL specification is saved in a binary tree data structure where each node is a formula with the left/right child as the left/right corresponding subformula of $\varphi$. In addition, we assume that the nodes of  $\varphi$'s tree contain a field called $EI$ where we annotate the effective interval of $\varphi$ in $EI$, namely, $\varphi.EI\gets I_{\varphi}$.}.
The algorithm must be initialized with the interval of [0,0] for the top node of the MITL formula, namely, EIU($\varphi$,[0,0]). 
This is because, according to the semantics of MITL, the value of the whole MITL formula is only important at time zero. In Line 8 of Algorithm \ref{alg:AIcom}, the operator $\oplus$ is used to add two intervals as follows: 
\begin{definition}[$\oplus$] 
\label{add_interval}
Given intervals $I=[l,u]$ and $I'=[l',u']$, we define $I''\gets I  \oplus I'$ where $I''=[l'',u'']$ such that $l''=l+l'$ and $u''=u+u'$. 
\end{definition}
If either $I$ or $I'$ is left open (resp. right open), then $I''$ will be left open (resp. right open)\footnote{Although we assume in Definition \ref{def:syn} that intervals are right-closed, Algorithm \ref{alg:AIcom} can be applied to right-open intervals as well.}. 
In Line 1 of Algorithm \ref{alg:AIcom}, the input interval $I$ is assigned to the effective interval of $\varphi$, namely $\varphi.EI$. If the top operation of $\varphi$ is a propositional operation ($\neg,\vee,\wedge,\Rightarrow$) then the $I$ will be propagated to subformulas of $\varphi$ (see Lines 2-6). If the top operation of $\varphi$ is a temporal operator ($\Box_{I'},\Diamond_{I'}$), then the effective interval is modified according to Definition \ref{add_interval} and the interval $I''\gets I  \oplus I'$ is propagated to the subformulas of $\varphi$. 
For example, assume that the MITL specification is $\varphi_{RR}=\Box_{[1,2]}(\Diamond_{[3,5]}b\Rightarrow(\Box_{[4,6]}(c\Rightarrow \Diamond_{[0,2]}d)))$. The specification $\varphi$ has two antecedents, $\alpha_1=\Diamond_{[3,5]}b$ and $\alpha_2=c$. The effective intervals of $\alpha_1$ and $\alpha_2$ are $I_{\alpha_1}=[0,0]\oplus[1,2]=[1,2]$ and $I_{\alpha_2}=[0,0]\oplus[1,2]\oplus[4,6]=[5,8]$, respectively. As a result, the antecedent failure mutations are $\Box_{[1,2]}(\neg\Diamond_{[3,5]}b)$ and $\Box_{[5,8]}(\neg c)$, respectively.
Algorithm \ref{alg:antec} returns the list of antecedent failures $AF_\varphi$, namely all the vacuously satisfied implication subformulas by $s$. If the $AF_\varphi$ list is empty, then the signal $s$ is not vacuous.
\bhl{To check whether the signal $s$ satisfies $\varphi$'s mutations in Algorithm \ref{alg:antec} (Line 5), we should use an off-line monitor such as \cite{Fainekos2012}.}
\begin{figure}
	\noindent
	\begin{minipage}{.51\linewidth}
		\vspace{-12pt}
		\begin{algorithm}[H]
			\caption{Antecedent Failure}
			{\bf Input}: $\varphi_{RR}$,$s$ (RR-Specification, Signal)\\
			{\bf Output}: $AF_\varphi$ a list of failed antecedents
			\label{alg:antec}
			\begin{algorithmic}[1]
				\State $AF_\varphi\gets \emptyset$
				\State EIU($\varphi$,[0,0])
				\For {each implication $(\varphi_i\Rightarrow\psi_i)\in\varphi_{RR}$}
				\State  $I\varphi_i\gets\varphi_i.EI$
				\If{$s\models\Box_{I\varphi_i}(\neg\varphi_i)$}
				\State $AF_\varphi\gets AF_\varphi\cup(\varphi_i\Rightarrow\psi_i)$
				\EndIf
				\EndFor
				\State \Return $AF_\varphi$
			\end{algorithmic}
		\end{algorithm}
	\end{minipage}
	\vline
	\begin{minipage}{.49\linewidth}
		\vspace{-12pt}
		\begin{algorithm}[H]
			\caption{Literal Occurrence Removal}
			{\bf Input}: $\Phi$,$s$ (Specification, Signal)\\
			{\bf Output}: $MF_\varphi$ a list of mutated formulas
			\label{alg:litOcRe}
			\begin{algorithmic}[1]
				\State $MF_\varphi\gets \emptyset$
				\For {each formula $\varphi_i\in\Phi$}
				\For {each $l\in litOccur(\varphi_i)$}
				\If{$s\models\varphi_i[l\leftarrow\perp]$}
				\State $MF_\varphi\gets MF_\varphi\cup\{\varphi_i[l\leftarrow\perp]\}$
				\EndIf
				\EndFor
				\EndFor
				\State \Return $MF_\varphi$
			\end{algorithmic}
		\end{algorithm}
	\end{minipage}
\end{figure}

\subsection{Vacuity Detection in Testing and Falsification}
\label{VAT}
Detecting vacuous satisfaction of specifications is usually applied on top of model checking tools for finite state systems \cite{BeerBER01,KupfermanV03}. 
However, in general, the verification problem for hybrid automata (a mathematical model of CPS) is undecidable \cite{hs_gf:Alur95algorithmic}. 
Therefore, a formal guarantee about the correctness of CPS modeling and design is impossible, in general.
 CPS are usually safety critical systems and the verification and validation of these systems is necessary. 
One approach is to use Model Based Design (MBD) with a mathematical model of the CPS to facilitate the system analysis and implementation \cite{AbbasTECS2013}. 
Thus, semi-formal verification methods are gaining popularity \cite{KapinskiDJIB15}. 
Although we cannot solve the correctness problem with testing and monitoring, we can detect possible errors with respect to STL requirements. 

In Fig. \ref{fig:flow}, a testing approach for signal vacuity detection is presented. 
The input generator creates initial conditions and inputs to the system under test. The system under test can be a Model, Processor in the Loop (PiL), Hardware in the Loop (HiL) or a real system. 
An example of a test generation technology that implements the architecture in Fig. \ref{fig:flow} is presented in \cite{AbbasTECS2013}.
The system under test is simulated to generate an output trace. Then, a monitor checks the trace with respect to the specification and reports to the user whether the system trace satisfies or falsifies the specification (for example \cite{Maler2004}). 
For each falsification, we will report to the user the falsifying trajectory to investigate the system for this error.
Falsification based approaches for CPS can help us find subtle bugs in industrial size control systems \cite{JinEtAl13hscc}. 
If after using stochastic-based testing and numerical analysis we could not find any bugs, then we are more confident that the system works correctly.
However, it will be \bhl{concerning if} the numerical analysis is mostly based on vacuous signals. \bhl{This is because} vacuous signals satisfy the specification for reasons other than what was \bhl{originally} intended.
Signal vacuity checking is conducted in Fig. \ref{fig:flow} using Algorithm \ref{alg:antec}, and vacuous signals are reported to the user for further inspection. 
This will help users to focus \bhl{their analysis} on the part of the system that \yhl{generates vacuous signals to prevent vacuous test generation.}
\subsubsection {Detecting Partially Covering Signals} \label{sec:LOR}
\bhl{
A problem closely related to vacuity detection is the partial coverage problem. In this section, we show that Literal Occurrence Removal can be used for determining partially covering signals. 
Partially covering signals are the signals that not only satisfy the specification but also they satisfy Literal Occurrence Removal mutation:} 
\begin{definition}[Partially Covering Timed Trace (Signal)] 
	\label{def:covSig}
	\bhl{Given an MITL (STL) formula $\varphi$ in NNF, a timed trace $\tss$ (signal $s$) is partially covering if it satisfies the Literal Occurrence Removal mutation of $\varphi$.}
\end{definition}
This mutation is generated by repeatedly substituting the occurrences of literals with $\perp$\bhl{, which is} denoted by $\varphi[l\leftarrow\perp]$ (see the Definition \ref{VaclitOc}). 
In Algorithm \ref{alg:litOcRe}, we check whether the signal will satisfy the mutated specification ($\varphi_i[l\leftarrow\perp]$). 
\bhl{In the following, we prove that all satisfying signals are also partially covering signals:}

\begin{theorem}
	\label{thm:sigVacuity} 
	For all MITL formula $\varphi\in\Phi$ in NNF, if there exists a disjunction subformula in $\varphi$, then for all $\tss$ such that $\tss\models\varphi$, it is always the case that \bhl{there exists a literal $l\in litOccur(\varphi)$ s.t. $\tss\models\varphi[l\leftarrow\perp]$.}
\end{theorem}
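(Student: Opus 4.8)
The plan is to prove the statement by structural induction on the NNF formula $\varphi$, exploiting the fact that an NNF formula is \emph{monotone} (positive) in every one of its literal occurrences: since negation reaches only atomic propositions, replacing an occurrence $l$ by $\perp$ can only make the formula harder to satisfy, so $\varphi[l\leftarrow\perp]\models\varphi$. Consequently the mutation is logically \emph{stronger} than $\varphi$, and the content of the theorem is that, whenever $\varphi$ contains a $\vee$, some single occurrence is already inessential for the given trace. To make the induction usable under temporal operators I would prove the stronger, time-indexed claim: for every $t\in[0,T]$ and every NNF subformula $\psi$, if $(\tss,t)\models\psi$ and $\psi$ contains a disjunction, then there is an occurrence $l\in litOccur(\psi)$ with $(\tss,t)\models\psi[l\leftarrow\perp]$. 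The theorem then follows by instantiating $t=0$.

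First I would dispatch the cases where the disjunction sits in a positive, ``existential'' position. For a top-level disjunction $\psi=\psi_1\vee\psi_2$ with $(\tss,t)\models\psi$, one disjunct, say $\psi_1$, already holds at $t$; killing any occurrence $l\in litOccur(\psi_2)$ gives $\psi_1\vee\psi_2[l\leftarrow\perp]$, which still holds because $\psi_1$ is untouched. For a conjunction $\psi=\psi_1\wedge\psi_2$ whose disjunction lies in, say, $\psi_1$, I apply the induction hypothesis to $\psi_1$ at $t$, obtaining an occurrence $l$ with $(\tss,t)\models\psi_1[l\leftarrow\perp]$; since $\psi_2$ is unchanged and still satisfied, $(\tss,t)\models\psi[l\leftarrow\perp]$. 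For $\psi=\Diamond_I\chi$, satisfaction supplies a witness $t'\in(t+I)\cap[0,T]$ with $(\tss,t')\models\chi$; applying the hypothesis to $\chi$ at $t'$ and re-wrapping with $\Diamond_I$ (the chosen occurrence lives inside $\chi$) gives $(\tss,t)\models\Diamond_I\chi[l\leftarrow\perp]$.

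The main obstacle is the \emph{always} case $\psi=\Box_I\chi$. Here $(\tss,t')\models\chi$ holds for \emph{every} $t'$ in the nonempty set $(t+I)\cap[0,T]$, and the induction hypothesis only yields, for each such $t'$ separately, some occurrence $l_{t'}$ that may depend on $t'$; but the mutation $\psi[l\leftarrow\perp]=\Box_I\,\chi[l\leftarrow\perp]$ must fix one and the same occurrence $l$ that works simultaneously at all points of the interval. A single global choice need not exist in general: if the interval decomposes into subintervals on which different disjuncts of $\chi$ carry the satisfaction, then deleting any one literal breaks $\Box_I\chi$ somewhere (e.g.\ a trace satisfying $\Box_{[0,5]}(p\vee q)$ by having $p$ hold on an initial part and $q$ on the remainder satisfies neither $\Box_{[0,5]}q$ nor $\Box_{[0,5]}p$). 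I therefore expect this step to require either a genuine restriction---disjunctions occurring only in $\Diamond$/$\wedge$/$\vee$ (upward-monotone, existential) contexts rather than under a $\Box$---or an additional argument that, using the finite-variability assumption and the shape of the admitted fragment, extracts a single disjunct valid on the whole effective interval. Closing this gap, rather than the routine monotonicity bookkeeping in the other cases, is where the real work lies.

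Finally, I would record the two standing conventions the argument needs: that the formula is \emph{reduced}, so each disjunct actually contains a literal occurrence (otherwise $litOccur(\psi_2)$ could be empty and the disjunction would simplify to a constant), and that the timed traces satisfy finite variability, which guarantees the witness time in the $\Diamond_I$ case and is what any uniform-witness argument in the $\Box_I$ case would exploit. With the $\Box_I$ case settled, the time-indexed claim at $t=0$ yields the theorem as stated.
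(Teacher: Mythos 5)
Your structural induction is sound in the cases you complete (the $\vee$, $\wedge$, and $\Diamond_I$ steps, using positive monotonicity of NNF formulas), and the obstacle you isolate at $\Box_I$ is not a defect of your write-up: it is a genuine counterexample to the theorem as stated. Take $\varphi=\Box_{[0,5]}(p\vee q)$ and a non-Zeno trace $\mu$ in which $p$ holds exactly on an initial segment of $[0,5]$ and $q$ exactly on the complementary final segment. Then $\mu\models\varphi$, but $\varphi[p\leftarrow\perp]\equiv\Box_{[0,5]}q$ and $\varphi[q\leftarrow\perp]\equiv\Box_{[0,5]}p$ both fail, so no single literal occurrence works; finite variability does not rescue the uniform-witness step, since this trace has only one switch.

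You should also know that the paper's own proof does not close this gap --- it steps over it. The paper does no induction; it takes the disjunction subformula $\psi=\psi_1\vee\psi_2$ of the conjunct $\varphi_i$ and splits into two cases: (1) $(\mu,t)\not\models\psi$ for all $t$, where the mutation is argued not to affect satisfaction (this case is fine: by monotonicity $\psi[l\leftarrow\perp]$ and $\psi$ are then both everywhere false); and (2) there exists a $t$ with $(\mu,t)\models\psi$, where it fixes that single $t$, chooses $l$ from whichever disjunct is not needed at that $t$, concludes $(\mu,t)\models\psi[l\leftarrow\perp]$, and then asserts ``accordingly $\mu\models\varphi_i[l\leftarrow\perp]$.'' That final inference is exactly the step you flag as missing: when $\psi$ sits under a $\Box_I$, its truth is required at every point of the effective interval, and the disjunct carrying the satisfaction may change from point to point, as your example shows. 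So your instinct is correct that a valid proof needs either a restriction on where disjunctions may occur (only in your ``existential'' $\Diamond/\wedge/\vee$ contexts, not under $\Box$), or a weaker conclusion in which the chosen occurrence $l$ may vary with time. As written, the theorem is false and the paper's argument is unsound at precisely the point where you stopped; identifying that is the substantive contribution of your attempt.
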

\bhl{The proof of Theorem \ref{thm:sigVacuity} is provided in Appendix \ref{app:SV}. For any MITL (STL) specification $\varphi$, which contains one or more disjunction operators ($\vee$) in NNF, any timed trace (signal) that satisfies $\varphi$ will also satisfy a mutation $\varphi[l\leftarrow\perp]$ for some literal occurrence $l$.}
Further, any specification which lacks a disjunction operator ($\vee$) in its NNF will not satisfy $\varphi[l\leftarrow\perp]$ for any literal occurrence $l$. 
That is, for formulas without any disjunction operator in NNF, we have
$\varphi[l\leftarrow\perp]\equiv\perp$ since for any MITL formula $\varphi$, we have $\varphi\wedge\bot\equiv\bot$. 
\bhl{If all satisfying signals are partially covering signals, what is the benefit of Algorithm \ref{alg:litOcRe}?
	There are two applications for Algorithm \ref{alg:litOcRe} as follows.}

First, Algorithm \ref{alg:litOcRe} can find which (how many) disjuncts are satisfied by the partially covering signal.
This information can be \bhl{used by the falsification technique} to find the disjuncts/predicates that cause the \bhl{formula} satisfaction.
\yhl{Therefore, the falsification method will target the system behaviors corresponding to those predicates.}
As a result, Algorithm \ref{alg:litOcRe} can be used to \bhl{improve the falsification method}.

Second, Algorithm \ref{alg:litOcRe} can \bhl{also} be used for coverage analysis when falsification occurs.
This is because with a slight modification, the dual of Algorithm \ref{alg:litOcRe} can help us to find the source of the falsification.
According to Corollary \ref{cor:con}, for any $\varphi$ in NNF, where $\varphi$ has a conjunctive subformula of $\psi=\psi_1\wedge\psi_2$, if $\tss\not\models\varphi$ then $\exists l\in litOccur(\psi)$ s.t. $\tss\not\models\varphi[l\leftarrow\top]$.
Now, we can identify which conjunct of $\psi$ contributes towards the falsification by substituting it by iteratively applying $\varphi[l\leftarrow\top]$. This can be better explained in the following example:
\begin{example}
	Assume $\varphi=\Diamond_{I_1}(a\wedge\Diamond_{I_2}b)$ and a falsifying trace $\tss\not\models\varphi$ exists. Formula $\varphi$ contains conjunction of $\psi=a\wedge\Diamond_{I_2}b$ and  $litOccur(\psi)=\{a,b\}$. We can substitute $a$ and $b$ with $\top$ to find the main source of falsification of $\varphi$ as follows: \\
	$\bullet$ If $\tss\models\varphi[a\leftarrow\top]$ then $\tss\models\Diamond_{I_1}(\Diamond_{I_2}b)$, so $a$ is the source of the problem.\\
	$\bullet$ If $\tss\models\varphi[b\leftarrow\top]$ then $\tss\models\Diamond_{I_1}(a)$, so $b$ is the source of the problem.
\end{example}
As a result, the dual of Algorithm \ref{alg:litOcRe} can be used to debug a trace when the counter example is created using falsification methodologies \cite{AbbasTECS2013}.

\begin{figure}
	\noindent
	\begin{minipage}{.60\linewidth}
		\vspace{-10pt} 
		\centering
		\includegraphics[width=8cm]{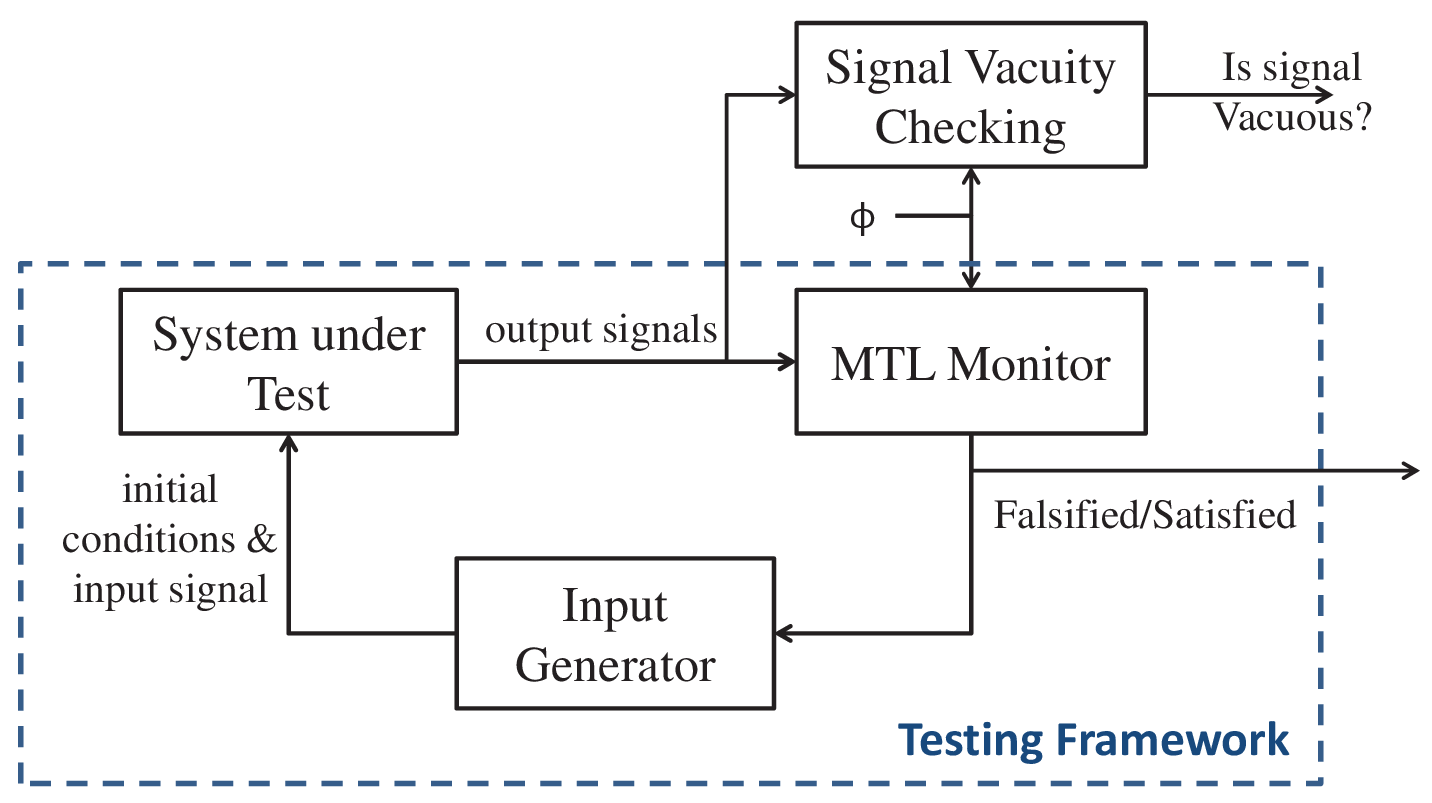}
		\caption{Using signal vacuity checking to improve the confidence of an automatic test generation framework.}
		\label{fig:flow}
		\vspace{-5pt} 
	\end{minipage}
	\begin{minipage}{.40\linewidth}
		\vspace{-10pt} 
		\centering
		\includegraphics[width=5.5cm]{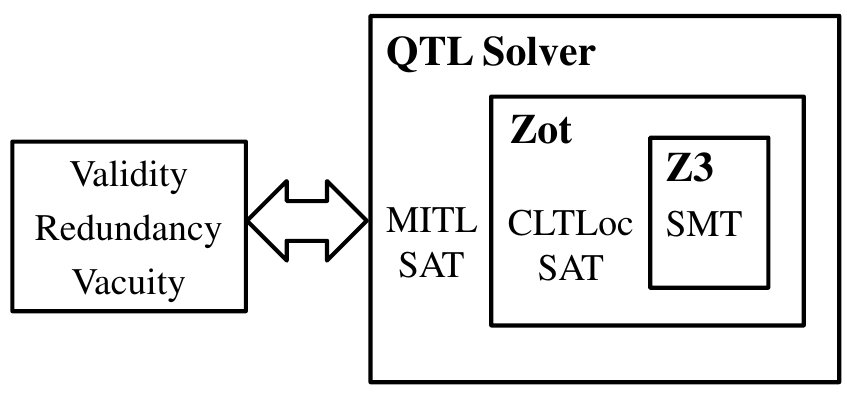}
		\caption{The MITL SAT solver from \cite{BersaniRP16} is used for debugging specifications.}
		\label{fig:MITLSAT}
		\vspace{-5pt} 
	\end{minipage}
\end{figure}

\section{Experimental Analysis}
\label{experiments}
\yhl{All three levels of the correctness analysis} of MITL specifications need satisfiability checking as the underlying tool \cite{MITLSat13}.
In validity checking, we simply check whether the specification and its negation are satisfiable.
In general, in order to check whether $\varphi\models\psi$, we should check whether $\varphi\Rightarrow\psi$ is a tautology, that is $\forall\tss,\tss\models\varphi\Rightarrow\psi$.
This can be verified by checking whether $\neg(\varphi\Rightarrow\psi)$ is unsatisfiable.
Recall that $\varphi\Rightarrow\psi$ is equivalent to $\neg\varphi\vee\psi$.
So we have to check whether $\varphi\wedge\neg\psi$ is unsatisfiable to conclude that $\varphi\models\psi$.
We use the above reasoning for redundancy checking as well as for vacuity checking. 
For \yhl{redundancy checking of conjuncts at the root level}, $\{\Phi\backslash\varphi_i\}\wedge\neg\varphi_i$ should be unsatisfiable, in order to conclude that $\{\Phi\backslash\varphi_i\}\models\varphi_i$. 
For vacuity checking, $\Phi\wedge\neg(\varphi_i[l\leftarrow\perp])$ should be unsatisfiable, in order to prove that $\Phi\models\varphi_i[l\leftarrow\perp]$.

\subsection{MITL Satisfiability}
\label{MITLSAT}
The satisfiability problem of MITL is \yhl{EXPSPACE-complete} \cite{AlurFH96}.
In order to check whether an MITL formula is satisfiable we use two publicly available tools: qtlsolver\footnote{qtlsolver:
A solver for checking satisfiability of Quantitative / Metric Interval Temporal Logic (MITL/QTL) over Reals. Available from \url{https://code.google.com/p/qtlsolver/}} and zot\footnote{The zot bounded model/satisfiability checker. Available from \url{https://code.google.com/p/zot/}}. The qtlsolver that we used translates MITL formulas into CLTL-over-clocks \cite{BersaniRP16,MITLSat13}. Constraint LTL (CLTL) is an extension of LTL where predicates \yhl{are allowed to be assertions} on the values of non-Boolean variables \cite{DemriD07}. That is, in CLTL, we are allowed to define predicates using relational operators for variables over domains like $\Ne$ and $\Ze$. Although satisfiability of CLTL in general is not decidable, some variants of it are decidable \cite{DemriD07}. 

CLTLoc (CLTL-over-clocks) is a variant of CLTL where the clock variables are the only arithmetic variables that are considered in the atomic constraints. It has been proven in \cite{BersaniRP14} that CLTLoc is equivalent to timed automata \cite{mc_gf:CGP99}. Moreover, it can be polynomially reduced to decidable Satisfiability Modulo Theories which are solvable by many SMT solvers such as Z3\footnote{Microsoft Research, Z3: An efficient SMT solver. Available from \url{http://research.microsoft.com/en-us/um/redmond/projects/z3/}}. The satisfiability of CLTLoc is PSPACE-complete \cite{MITLSat13} and the translation from MITL to CLTLoc in the worst case can be exponential \cite{BersaniRP16}. 
Some restrictions must be imposed on the MITL formulas in order to use the qtlsolver \cite{BersaniRP16}. That is, the lower bound and upper bound for the intervals of MITL formulas should be integer values and the intervals are left/right closed. 
Therefore, we expect the values to be integer when we analyse MITL formulas. 
The high level architecture of the MITL SAT solver, which we use to check the three issues, is provided in Fig. \ref{fig:MITLSAT}.
\subsection{Specification Debugging Results}
We utilize the debugging algorithm on a set of specifications developed as part of a usability study for the evaluation of the \textsc{ViSpec} tool \cite{Hoxha_ViSpecIROS15}. The usability study was conducted on two groups:
\vspace{-5pt} 
\begin{enumerate}
\item Group A: These are users who declared that they have little to no experience in working with requirements. \yhl{The Group A} cohort consists of twenty subjects from the academic community at Arizona State University. 
Most of the subjects have an engineering background. 
\item Group B: These are users who declared that they have experience working with system requirements. Note that they do not necessarily have experience in writing requirements using formal logics. \yhl{The Group B} subject cohort was comprised of ten \yhl{subjects from industry} in the Phoenix metro area. 
\end{enumerate}
\vspace{-5pt} 
Each subject received a task list to complete. The list contained ten tasks related to automotive system specifications. Each task asked the subject to formalize a natural language specification through \textsc{ViSpec} and generate an STL specification. The task list is presented in Table \ref{tab:taskList}. A detailed report on the accuracy of the users response to each natural language requirement is provided in \cite{Hoxha_ViSpecIROS15}. Note that the specifications were preprocessed and transformed from the original STL formulas to MITL in order to run the debugging algorithm. For example, specification $\phi_3$ in Table \ref{tab:vacResults} originally in STL was $\phi_{3_{STL}}=\Diamond_{[0,40]}((( speed > 80 ) \Rightarrow  \Diamond_{[0,20]}(rpm > 4000) )\wedge \Box_{[0,30]} (speed > 100) )$. The STL predicate expressions $( speed > 80 ),(rpm > 4000),(speed > 100) $ are mapped into atomic propositions with non-overlapping predicates (Boolean functions) $p_1,p_2,p_3$. The predicates $p_1,p_2,p_3$ correspond to the following STL representations: $p_1 \equiv speed>100$, $p_2 \equiv rpm>4000$, and $p_3 \equiv 100\ge speed>80$.
In Table \ref{tab:vacResults}, we present the common issues with the elicited specifications that our debugging algorithm detects. 
Note that validity, redundancy and vacuity issues are present in the specifications listed. 
It should be noted that for specification $\phi_3$, although finding the error takes a significant amount of time, our algorithm can be used off-line. 

In Fig. \ref{fig:debuggingRuntimeOverhead}, we present the runtime overhead of the three stage debugging algorithm over specifications collected in the usability study. 
In the first stage, 87 specifications go through validity checking. 
Five specifications fail the test and therefore they are immediately returned to the user.  
As a result, 82 specifications go through redundancy checking of conjunction in the root level \footnote{In these experiments, we did not consider conjunctions in the lower level subformulas for redundancy checking.}, where 9 fail the test.
Lastly, 73 specifications go through vacuity checking \yhl{where 5 specifications have vacuity issues.} The remaining 68 specifications passed the tests. Note that in the figure, two outlier data points are omitted from the vacuity sub-figure for presentation purposes. The two cases were timed at 39,618sec and 17,421sec. In both cases, the runtime overhead was mainly because the zot software took hours to determine that the modified specification is unsatisfiable (both specifications were vacuous). The overall runtime of $\phi_3$ in Table \ref{tab:vacResults} is 39,645sec which includes the runtime of validity and redundancy checking. The runtime overhead of vacuity checking of $\phi_3$ can be reduced by half because, originally, in vacuity checking we run MITL satisfiability checking for all literal occurrences. In particular, $\phi_3$ has four literal occurrences where for two cases zot took more than 19,500sec to determine that the modified specification is unsatisfiable. We can provide an option for early detection: stop and report as soon as an issue is found (the first unsatisfiability).

The circles in Fig. \ref{fig:debuggingRuntimeOverhead} represent the timing performance in each test categorized by the number of literal occurrences and temporal operators. The asterisks represent the mean values and the dashed line is the linear interpolation between them.
In general, we observe an increase in the average computation time as the number of literal occurrences and temporal operators increases. 
All the experimental results in Section \ref{experiments} were performed on an Intel Xeon X5647 (2.993GHz) with 12 GB RAM.
\begin{table*}[tb]
  \centering
  \ifthenelse{\boolean{ARXIV}}{\caption{Task list with automotive system specifications presented in natural language\label{tab:taskList}}}{\tbl{Task list with automotive system specifications presented in natural language\label{tab:taskList}}}
  {
    \begin{tabular}{p{3.2cm}p{10cm}}
    \toprule
    Task  & Natural Language Specification \\
    \midrule
    1.   Safety  & In the first 40 seconds, vehicle speed should always be less than 160. \\
    2.   Reachability & In the first 30 seconds, vehicle speed should go over 120. \\
    3.   Stabilization & At some point in time in the first 30 seconds, vehicle speed will go over 100 and stay above for 20 seconds. \\
    4.   Oscillation & At every point in time in the first 40 seconds, vehicle speed will go over 100 in the next 10 seconds. \\
    5.   Oscillation & It is not the case that, for up to 40 seconds, the vehicle speed will go over 100 in every 10 second period.  \\
    6.   Implication & If, within 40 seconds, vehicle speed is above 100 then within 30 seconds from time 0, engine speed should be over 3000. \\
    7.   Request-Response & If, at some point in time in the first 40 seconds, vehicle speed goes over 80 then from that point on, for the next 30 seconds, engine speed should be over 4000. \\
    8.   Conjunction & In the first 40 seconds, vehicle speed should be less than 100 and engine speed should be under 4000. \\
    9.   Non-strict sequencing & At some point in time in the first 40 seconds, vehicle speed should go over 80 and then from that point on, for the next 30 seconds, engine speed should be over 4000. \\
    10. Long sequence & If, at some point in time in the first 40 seconds, vehicle speed goes over 80 then from that point on, if within the next 20 seconds the engine speed goes over 4000, then, for the next 30 seconds, the vehicle speed should be over 100. \\
    \bottomrule
    \end{tabular}}
\end{table*}

\begin{table*}[t]
\centering
\ifthenelse{\boolean{ARXIV}}{\caption{Incorrect specifications from the usability study in \cite{Hoxha_ViSpecIROS15}, error reported to the user by the debugging algorithm, and algorithm runtime. Formulas have been translated from STL to MITL.\label{tab:vacResults}}}{\tbl{Incorrect specifications from the usability study in \cite{Hoxha_ViSpecIROS15}, error reported to the user by the debugging algorithm, and algorithm runtime. Formulas have been translated from STL to MITL.\label{tab:vacResults}}} {
  \begin{tabular}{| c | c | c | c |c |}
    \hline
   $\phi$ & Task \# & MITL Specification created by \textsc{ViSpec} users & Reporting the errors & Sec.\\ \hline\hline
   $\phi_1$ & 3& $\Diamond_{[0,30]} p_1 \wedge \Diamond_{[0,20]} p_1$ & $\Diamond_{[0,30]} p_1$ is redundant & 14 \\ \hline 
 $\phi_2$ & 3 & $\Diamond_{[0,30]}(p_1\Rightarrow\Box_{[0,20]} p_1 ) $& $\varphi$ is a tautology & 7 \\ \hline  
   $\phi_3$ & 10& $\Diamond_{[0,40]}((( p_1 \vee p_3 ) \Rightarrow \Diamond_{[0,20]} p_2 )\wedge \Box_{[0,30]} p_1 )$ & $\varphi$ is vacuous: $\varphi\models\varphi[p_3\leftarrow\perp]$ & 39645 \\ \hline 
  $\phi_4$ & 4 & $\Box_{[0,40]} p_1 \wedge \Box_{[0,40]} \Diamond_{[0,10]} p_1$ & $\Box_{[0,40]} \Diamond_{[0,10]} p_1$ is redundant& 29 \\ 
        \hline
$\phi_5$ &  10 & $\Diamond_{[0,40]}( p_1 \vee p_3) \wedge \Diamond_{[0,40]} p_2 \wedge \Diamond_{[0,40]} \Box_{[0,30]} p_1 $
 & $\Diamond_{[0,40]}( p_1 \vee p_3)$ is redundant& 126 \\ 
         \hline
    \end{tabular}}
\end{table*}

\begin{figure*}
\centering
\begin{tabular}{c}
Validity \\
\includegraphics[width=13cm]{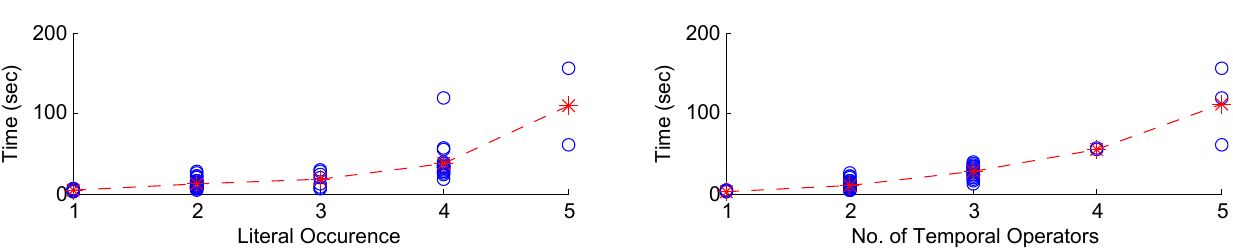} \\
Redundancy \\
\includegraphics[width=13cm]{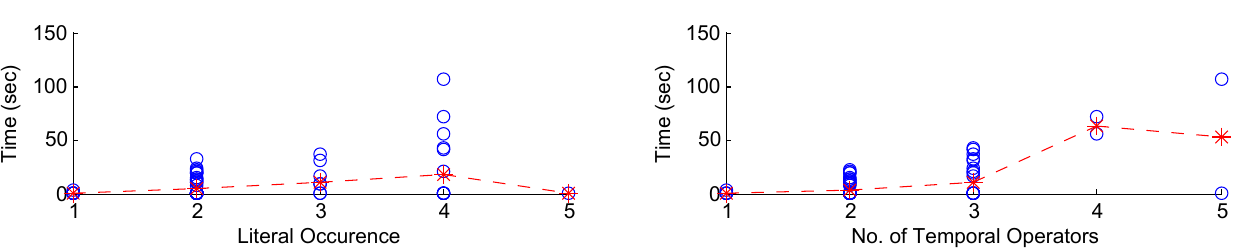} \\
Vacuity \\
\includegraphics[width=13cm]{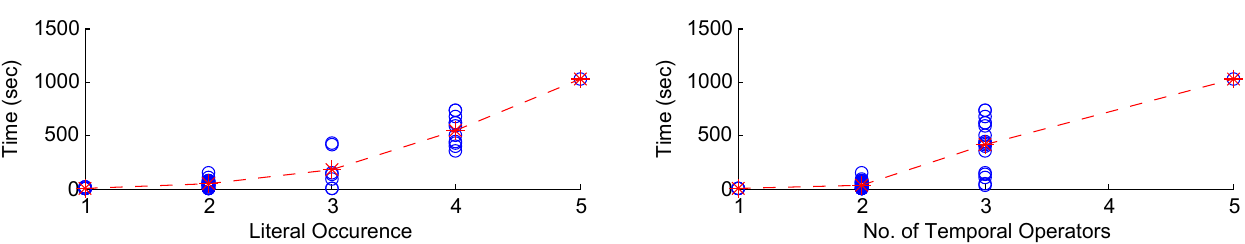} 
\end{tabular}
\caption{Runtime overhead of the three stages of the debugging algorithm over user-submitted specifications. Timing results are presented over the number of literal occurrences and the number of temporal operators.}
\label{fig:debuggingRuntimeOverhead}
\end{figure*}
\subsection{LTL Satisfiability}
\label{ltlsat}
In the previous section, we mentioned \yhl{that MITL satisfiability} is a computationally hard problem. However, in practice, we know that LTL satisfiability is solvable faster \yhl{than MITL satisfiability} \cite{LiZPVH13}. In this section, we consider how we can use the satisfiability of LTL formulas to decide about the satisfiability of MITL formulas. Consider the following fragments of MITL and LTL in NNF:\\
MITL($\Box$):
$\varphi\;::=\;\top \;|\; \bot \; | \; p \; | \; \neg p \; | \; \varphi _1 \wedge \varphi_2\; | \; \varphi _1 \vee \varphi_2  \; | \; \Box_I\varphi _1 $\\
MITL($\Diamond$):  $\varphi\;::=\;\top \;|\; \bot \; | \; p \; | \; \neg p \; | \; \varphi _1 \wedge \varphi_2\; | \; \varphi _1 \vee \varphi_2 \; | \; \Diamond_I\varphi_1$\\
LTL($\Box$):  $\varphi\;::=\;\top \;|\; \bot \; | \; p \; | \; \neg p \; | \; \varphi _1 \wedge \varphi_2\; | \; \varphi _1 \vee \varphi_2  \; | \; \Box\varphi _1 $\\
LTL($\Diamond$):  $\varphi\;::=\;\top \;|\; \bot \; | \; p \; | \; \neg p \; | \; \varphi _1 \wedge \varphi_2\; | \; \varphi _1 \vee \varphi_2 \; | \; \Diamond\varphi_1$

In Appendix \ref{app:MITLvsLTL}, we prove that the satisfaction of a formula $\phi_M\in$ MITL($\Diamond$) in NNF is related to the satisfaction of an LTL version of $\phi_M$ called $\phi_L\in$ LTL($\Diamond$) where $\phi_L$ is identical to $\phi_M$ except \yhl{that every interval} $I$ in $\phi_M$ is removed. For example, if $\phi_M=\Diamond_{[0,10]}(p\wedge q)\wedge\Diamond_{[0,10]}p$ then $\phi_L=\Diamond(p\wedge q)\wedge\Diamond p$.
In essence, if $\phi_M$ is satisfiable, then $\phi_L$ is also satisfiable. Therefore, if $\phi_L$ is unsatisfiable, then $\phi_M$ is also unsatisfiable.

For the always ($\Box$) operator, satisfiability is the dual of the eventually operator ($\Diamond$). Assume that $\phi_M'\in$ MITL($\Box$) contains only the $\Box$ operator and $\phi_L'\in$ LTL($\Box$) is the LTL version of $\phi_M'$. If $\phi_L'$ is satisfiable, then $\phi_M'$ will also be satisfiable.

Based on the above discussion, if the specification that we intend to test/debug belongs to either category (fragment),  MITL($\Diamond$) or MITL($\Box$), then we can check the satisfiability of its LTL version ($\phi_L$) and decide according to the following:
\begin{theorem}
	\label{thm:MITL2LTL}
	For any formula $\phi_M\in$ MITL($\Diamond$) and $\phi_M'\in$ MITL($\Box$) then
	
	If $\phi_L\in$ LTL($\Diamond$) is unsatisfiable, then $\phi_M$ is unsatisfiable.

	If $\phi_L'\in$ LTL($\Box$) is satisfiable, then $\phi_M'$ is satisfiable.
\end{theorem}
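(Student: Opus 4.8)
The plan is to prove the two implications separately by explicit model transformations, since --- as a quick check with $\Box_{[2,3]}p\wedge\Box_{[5,6]}\neg p$ shows --- the two parts are genuinely asymmetric and cannot be obtained from one another by naive negation duality (that route only relates \emph{validity} of the $\Box$-fragment to validity of its LTL image, not satisfiability). For the first implication I would argue the contrapositive: assuming $\phi_M\in$~MITL($\Diamond$) is satisfiable, I produce an LTL($\Diamond$) model of $\phi_L$. For the second I would directly turn an LTL($\Box$) model of $\phi_L'$ into an MITL($\Box$) model of $\phi_M'$. In both cases the engine is a single structural induction on the formula that shows a satisfaction relation is preserved in the appropriate direction, and the only temporal case that does real work is the one surviving operator ($\Diamond_I$ versus $\Diamond$, or $\Box_I$ versus $\Box$).

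For $\phi_M$ satisfiable, fix a timed trace $\tss$ with $\tss\models\phi_M$. By the finite variability assumption, $[0,T]$ splits into finitely many maximal intervals on which $\tss$ is label-constant; reading off their labels left to right yields a finite word, which I extend to an infinite word $\sigma$ by repeating the final letter. Let $\mathrm{pos}(t)$ denote the index of the constant interval containing $t$, so $\mathrm{pos}$ is non-decreasing and $\sigma_{\mathrm{pos}(t)}=\tss(t)$. I then prove by induction on every subformula $\psi$ that $(\tss,t)\models\psi_M$ implies $(\sigma,\mathrm{pos}(t))\models\psi_L$. Literals hold by construction and the boolean cases are immediate. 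For $\Diamond_I$: if $(\tss,t)\models\Diamond_I\psi_M$ there is a witness $t'\in(t+I)\cap[0,T]$ with $t'\ge t$, hence $\mathrm{pos}(t')\ge\mathrm{pos}(t)$; the induction hypothesis gives $(\sigma,\mathrm{pos}(t'))\models\psi_L$, and monotonicity then yields $(\sigma,\mathrm{pos}(t))\models\Diamond\psi_L$. Taking $t=0$ (so $\mathrm{pos}(0)=0$) gives $\sigma\models\phi_L$, so $\phi_L$ is satisfiable, which is the contrapositive of the first claim.

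For the second implication, fix an infinite word $\sigma$ with $\sigma\models\phi_L'$ and time-stretch it into $[0,T]$: pick $N$ and set $\tss(t)=\sigma_{\lfloor tN/T\rfloor}$, so position $i$ occupies the time slice $[iT/N,(i+1)T/N)$ and $i(t):=\lfloor tN/T\rfloor$ is non-decreasing. I prove by induction that $(\sigma,i(t))\models\psi_L$ implies $(\tss,t)\models\psi_M$. Again literals and booleans are trivial. For $\Box_I$: suppose $(\sigma,i(t))\models\Box\psi_L$; for every $t'\in(t+I)\cap[0,T]$ we have $t'\ge t$, hence $i(t')\ge i(t)$, so $(\sigma,i(t'))\models\psi_L$ by the semantics of the unbounded $\Box$, and the induction hypothesis gives $(\tss,t')\models\psi_M$; since this holds for all such $t'$ we conclude $(\tss,t)\models\Box_I\psi_M$. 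Note I only ever invoke the hypothesis at positions $i(t')\le N$, so the bounded window never reaches beyond the stretched prefix. Instantiating at $t=0$ yields $\tss\models\phi_M'$.

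The main technical care --- rather than a single hard obstacle --- lies in getting the discretization and its inverse exactly right at the boundaries. For the first part I must make sure the induced word is genuinely finite (this is precisely where finite variability is needed) and that extending it to an infinite word does not create spurious witnesses; for the second I must confirm that the floor-based correspondence is monotone and that the clipping $(t+I)\cap[0,T]$ means only finitely many, all-covered, positions are ever consulted, so no constraint escapes the mapped prefix. The semantic step for each surviving operator hinges on the simple but essential facts that $\Diamond_I\psi\models\Diamond\psi$ and $\Box\psi\models\Box_I\psi$ under the position map, i.e. that the bounded window always sits inside the future cone that the unbounded operator quantifies over; once the monotonicity of $\mathrm{pos}$ and $i(\cdot)$ is pinned down, these reduce to bookkeeping.
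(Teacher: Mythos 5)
Your proof is correct, but it takes a genuinely different route from the paper's. The paper never discretizes: it equips LTL($\Diamond$,$\Box$) with \emph{continuous} semantics over the very same bounded timed traces (Definition \ref{def:ltl}), defines $[\phi]_{LTL}$ as interval erasure, and proves two language inclusions by structural induction --- $L_t(\varphi)\subseteq L_t([\varphi]_{LTL})$ for MITL($\Diamond$) (Theorem \ref{thm:diam}) and $L_t([\varphi]_{LTL})\subseteq L_t(\varphi)$ for MITL($\Box$) (Theorem \ref{thm:box}). In that setting the \emph{same} trace witnesses both the timed and the untimed formula, and the only fact doing any work in the temporal cases is $(t+I)\cap[0,T]\subseteq[t,T]$; all of your discretization bookkeeping (finite variability, the monotone map $\mathrm{pos}$, the floor map, boundary clipping) simply disappears. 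What your heavier construction buys is something the paper leaves implicit: you relate MITL satisfiability over continuous traces to \emph{standard discrete} LTL satisfiability over infinite words, which is what the off-the-shelf solvers invoked in Section \ref{ltlsat} (NuSMV) actually decide, whereas the paper's continuous-semantics statement still needs a bridging argument of exactly your kind before it licenses those tools. Conversely, the paper's formulation isolates the essential monotonicity fact and dispatches both fragments with one uniform, nearly trivial induction. Both you and the paper correctly treat the two claims as independent one-directional implications rather than duals of each other --- the paper by proving the two inclusions in opposite directions, you by the explicit counterexample $\Box_{[2,3]}p\wedge\Box_{[5,6]}\neg p$, whose LTL image $\Box p\wedge\Box\neg p$ is unsatisfiable even though the MITL formula is satisfiable.
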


In these two cases, we do not need to run MITL SAT, if otherwise, we must apply MITL SAT which means that we wasted effort by checking LTL SAT. However, since the runtime of LTL SAT is negligible, it will not drastically reduce the performance. As a result LTL satisfiability checking is useful for validity testing. For redundancy checks, it may also be useful. For example, if we have a formula $\phi=\Diamond_{[0,10]} p\wedge\Box_{[0,20]} p$ we should check the satisfiability of $\phi'=\Box_{[0,10]} \neg p\wedge\Box_{[0,20]} p$ and $\phi''=\Diamond_{[0,10]} p\wedge\Diamond_{[0,20]} \neg p$ for redundancy. 
Although the original formula $\phi$  does not belong to either MITL($\Diamond$) or MITL($\Box$), its modified NNF version will fit in these fragments and we may benefit by the usually faster LTL satisfiability for $\phi'$ and/or $\phi''$. 
For vacuity checking, we can use LTL satisfiability if after manipulating/simplifying the original specification and creating the NNF version, we can categorize the resulting formula into the MITL($\Diamond$) or the MITL($\Box$) fragments (see Table  \ref{tab:ltlResults}). 

We can check LTL satisfiability of the modified MITL specifications using existing methods and tools \cite{RozierLTLSAT10}. In our case, we used the NuSMV\footnote{ NuSMV Version 2.6.0. Available from \url{http://nusmv.fbk.eu/}} tool with a similar encoding of LTL formulas as in \cite{RozierLTLSAT10}.
In Table \ref{tab:ltlResults}, we compare the runtime overhead of MITL and LTL satisfiability checking. For the results of the usability study in \cite{Hoxha_ViSpecIROS15}, we conduct validity and vacuity checking with the LTL satisfiability solver. We remark that in our results in Table \ref{tab:ltlResults}, all the formulas belong to the MITL($\Box$) fragment. Since we did not find any MITL($\Diamond$) formula in our experiments where its LTL version is not satisfiable, we did not provide MITL($\Diamond$) formulas in Table \ref{tab:ltlResults}.

The first column of Table \ref{tab:ltlResults} provides the debugging test phase where we used the satisfiability checkers. The second column represents the MITL formulas that we tested using the SAT solver. We omit the LTL formulas from Table \ref{tab:ltlResults}, since they are identical to MITL but do not contain timing intervals. The atomic propositions  $p_1,p_2,p_3,p_4,p_5$ of the MITL formulas in Table \ref{tab:ltlResults} correspond to the following STL predicates: $p_1 \equiv speed>100$, $p_2 \equiv rpm>4000$, $p_3 \equiv 100\ge speed>80$, $p_4 \equiv rpm>3000$, and $p_5 \equiv speed>80$. The third and fourth columns represent the runtime overhead of satisfiability checking for MITL specifications and their corresponding LTL version. The last column represents the speedup of the LTL approach over the MITL approach. It can be seen that the LTL SAT solver (NuSMV) is about 30-300 times faster than the MITL SAT solver (zot). These results confirm that, when applicable, LTL SAT solvers outperform MITL SAT solvers in checking vacuity and validity issues in specifications. 
As a result, \yhl{it is worth running} LTL SAT before MITL SAT when it is possible.
\begin{table}[t]
	\centering
	\ifthenelse{\boolean{ARXIV}}{\caption{Comparing the runtime overhead of MITL satisfiability and LTL satisfiability (in Seconds) for some of the specifications from \textsc{ViSpec}'s usability study. \label{tab:ltlResults}}}{\tbl{Comparing the runtime overhead of MITL satisfiability and LTL satisfiability (in Seconds) for some of the specifications from \textsc{ViSpec}'s usability study. \label{tab:ltlResults}}}
	{
		\begin{tabular}{| c | c | c | c |c |}
			\hline
			Test & MITL Specification  & MITL  &  LTL  & MITL / LTL  \\
			Phase&   &   SAT & SAT & runtime  \\\hline\hline
			Validity & $\Box_{[0,40]}(p_1\Rightarrow\Box_{[0,10]}(p_1))$ & 4.154 & 0.047  & 88  \\ \hline 
			Validity & $\Box_{[0,30]}(\neg p_1)\vee\Box_{[0,20]}(\neg p_1)$ & 3.418  & 0.0538  & 63  \\ \hline 
			Validity & $\Box_{[0,40]}((\neg p_1\wedge \neg p_3)\vee\Box_{[0,20]}\neg p_2\vee\Box_{[0,30]}p_1))$ & 10.85  & 0.045  & 240  \\ \hline
			Validity & $\Box_{[0,40]}((p_1\vee p_3)\Rightarrow\Box_{[0,20]}(p_2\Rightarrow\Box_{[0,30]}p_1))$ & 15.406  & 0.0463  & 333  \\ \hline
			Vacuity &$\Box_{[0,40]}(p_1)$ & 1.71 & 0.0473 & 36 \\ \hline
			Vacuity & $\Box_{[0,40]}(p_1\wedge\Box_{[0,10]}(p_1))$ & 3.727 & 0.044  & 84  \\ \hline 
			Vacuity &$\Box_{[0,40]}p_1\wedge\Box_{[0,30]}(p_4)$ & 5.77 &  0.0456  & 126 \\ \hline
			Vacuity & $\Box_{[0,40]}p_5\wedge\Box_{[0,70]}(p_5)$  & 8.599 & 0.044 & 194 \\ \hline

		\end{tabular}}
	\end{table}
	
\subsection{Antecedent Failure Detection}
\label{afd}
	To apply signal vacuity checking we use the \staliro testing framework \cite{AbbasTECS2013,hoxhatowards}.
\staliro is a MATLAB toolbox that uses stochastic optimization techniques to search for system inputs for Simulink models which falsify the safety requirements presented in MTL/STL \cite{AbbasTECS2013}.
 The signal vacuity checking implemented in the \staliro tool is computationally efficient (PTIME). Its time complexity is proportional to the number of implication operations, the size of the formula and to the size of the signal \cite{Fainekos2012}.

In the following, we illustrate the vacuous signal detection process by using the Automatic Transmission (AT) model provided by Mathworks as a Simulink demo\footnote{Available at: http://www.mathworks.com/help/simulink/examples/modeling-an-automatic-transmission-controller.html}.
We introduced a few modifications to the model to make it compatible with the \staliro framework. Further details can be found in \cite{HoxhaAF14arch}. 
\staliro calls the AT Simulink model in order to generate the output trajectories.
The outputs contain two continuous-time real-valued signals: the speed of the engine $\omega$ (RPM) and the speed of the vehicle 
$v$. 
In addition, the outputs contain one continuous-time discrete-valued signal $gear$ with four possible values 
($gear=1$, ..., $gear=4$) which indicates the current gear in the auto-transmission controller. \staliro 
then monitors system trajectories with respect to the requirements provided in Table \ref{tab:ATreqs}.
There, in the MITL formulas, we use the shorthand $g_i$ to indicate the gear value, i.e. $(gear=i)\equiv g_i$. The simulation time for the system is set to 30 seconds; therefore, we can use bounded MITL formulas for the requirements. 

After testing the AT with \staliro, we collected all the system trajectories. 
Then, we utilized the antecedent failure mutation on the specification to check signal vacuity (Algorithm \ref{alg:antec}) for each of the formulas that are provided in Table \ref{tab:ATreqs}. We provide the antecedent failure specifications and \yhl{the number of signals that satisfy them} in Table \ref{tab:AFreqs}. 
It can be seen in Table \ref{tab:AFreqs} that most of the system traces are vacuous signals where the antecedent is not satisfied. This helps the users to consider these issues and identify interesting test cases that can be used to initialize the system tester so that the antecedent is always satisfied. \yhl{For an application of signal vacuity checking in falsification see} \cite{DokhanchiYHF17}.

\begin{table}[t]
	\centering
	\ifthenelse{\boolean{ARXIV}}{\caption{Automatic Transmission Requirements expressed in natural language and MITL from  \cite{HoxhaAF14arch}\label{tab:ATreqs}}}{\tbl{Automatic Transmission Requirements expressed in natural language and MITL from  \cite{HoxhaAF14arch}\label{tab:ATreqs}}}
	{
		\begin{tabular}{| c | l | c |}
			\hline
			Req. & {\bf Natural Language}  & MITL Formula \\ \hline 
			
			\multirow{2}{*}{$\phi^{AT}_{1}$}    & There should be no transition from gear two to gear & \multirow{2}{*}{$\Box_{[0,27.5]}((g_2\wedge\Diamond_{(0,0.04]}g_1)\Rightarrow\Box_{[0,2.5]}\neg g_2)$} \\
			&  one and back to gear two in less than 2.5 sec. &  \\  \hline 
			\multirow{2}{*}{$\phi^{AT}_{2}$}                & After shifting into gear one, there should be no shift & \multirow{2}{*}{$\Box_{[0,27.5]}((\neg g_1\wedge\Diamond_{(0,0.04]}g_1)\Rightarrow\Box_{[0,2.5]} g_1)$} \\
			&  from gear one to any other gear within 2.5 sec. &  \\\hline 
			\multirow{2}{*}{$\phi^{AT}_{3}$}                & If the $\omega$ is always less than 4500, then the $v$ can not  & \multirow{2}{*}{$\Box_{[0,30]}(\omega\le 4500) \Rightarrow\Box_{[0,10]}(v\le 85)$} \\
			& exceed 85 in less than 10 sec. &  \\  \hline 
			\multirow{2}{*}{$\phi^{AT}_{4}$} & Within 10 sec. $v$ is more than 80 and from that point& \multirow{2}{*}{$\Diamond_{[0,10]}((v\ge 80) \Rightarrow\Box_{[0,30]}(\omega\le 4500))$} \\
			& on, $\omega$ is always less than 4500. &   \\  \hline 
			
	\end{tabular}}
\end{table}
\begin{table}[t]
	\centering
	\ifthenelse{\boolean{ARXIV}}{\caption{Reporting signal vacuity issue for each mutated formula \label{tab:AFreqs}}}{\tbl{Reporting signal vacuity issue for each mutated formula \label{tab:AFreqs}}}	
	{
		\begin{tabular}{| c | l | c |}
			\hline
			Requirement	& {\bf Antecedent Failure Mutation}  & Vacuous Signals / All Signals \\ \hline 
			
			$\phi^{AT}_{1}$ & $\Box_{[0,27.5]}\neg(g_2\wedge\Diamond_{(0,0.04]}g_1)$ & 1989 / 2000 \\
			\hline 
			$\phi^{AT}_{2}$ & $\Box_{[0,27.5]}\neg(\neg g_1\wedge\Diamond_{(0,0.04]}g_1)$  &  1994 / 2000 \\ \hline 
			$\phi^{AT}_{3}$ & $\neg\Box_{[0,30]}(\omega\le 4500) $  & 97 / 307 \\  \hline 
			$\phi^{AT}_{4}$ &  $\Box_{[0,10]}\neg(v\ge 80)$ & 1996 / 2000 \\  \hline 
			
	\end{tabular}}
\end{table}
\section{Conclusion and Future Work}

We have presented a specification elicitation and debugging framework that can assist \textsc{ViSpec} users to produce correct formal specifications. In particular, the debugging algorithm enables the detection of logical inconsistencies in MITL and STL specifications. Our algorithm improves the elicitation process by providing feedback to the users on validity, redundancy and vacuity issues. 
In the future, the specification elicitation and debugging framework will be integrated in the \textsc{ViSpec} tool to simplify MITL and STL specification development for verification of CPS.
In addition, we considered vacuity detection with respect to signals. 
This enables improved analysis since some issues can only be detected when considering both the system and the specification.
In the future, we will consider the feasibility of using vacuous signals to improve the counter example generation process and system debugging using signal vacuity.

\section*{Acknowledgment}

This work was partially supported by NSF awards CNS 1350420 and CNS 1319560.

\bibliographystyle{abbrv}
\bibliography{bardh,fatesrv,adel}
\section*{APPENDIX}
\section{ Proof of Theorem \ref{thm:vacu}}
\label{app:Thm1}

In order to show that $\Phi$ is inherently vacuous, we must show that if $\Phi\models\varphi_i[l\leftarrow\perp]$, then the mutated specification is equivalent to the original specification. In other words, we should show that if $\Phi\models\varphi_i[l\leftarrow\perp]$, then $(\{\Phi\backslash\varphi_i\}\cup\varphi_i[l\leftarrow\perp])\equiv\Phi$ . If the mutated specification is equivalent to the original specification, then the original specification is vacuously satisfiable in any system. That is, the specification is {\it inherently vacuous} \cite{FismanKSV08,ChocklerS09}. 
 We already know that if $\Phi\models\varphi_i[l\leftarrow\perp]$, then $\Phi\implies\varphi_i[l\leftarrow\perp]$ and trivially $\Phi\implies\varphi_i[l\leftarrow\perp]\cup\{\Phi\backslash\varphi_i\}$. Now we just need to prove the other direction. 
We need to prove that when $\varphi_i$ \yhl{is in NNF,} then $\varphi_i[l\leftarrow\perp]\implies\varphi_i$. 
Since we replace only one specific literal occurrence of $\varphi$ with $\bot$, the rest of the formula remains the same. Therefore, it should be noted that $\varphi_i[l\leftarrow\perp]$ does not modify any $l'\in litOccur(\varphi_i)$ where $l'\ne l$.
\begin{proof}
We use structural induction to prove that $\varphi_i[l\leftarrow\perp]\implies\varphi_i$\\
{\bf Base Case:} $\varphi_i=l$ or $\varphi_i=l'\ne l$\\
We know that $\bot\implies l$ and $l'\implies l'$. Therefore $\varphi_i[l\leftarrow\perp]\implies\varphi_i$.\\
{\bf Induction Hypothesis:} For any MITL $\varphi_j$ in NNF we have $\varphi_j[l\leftarrow\perp]\implies\varphi_j$ (or $\forall\varphi_j,\varphi_j[l\leftarrow\perp]\implies\varphi_j$) \\
{\bf Induction Step:} We will separate the case into unary and binary operators.\\Before providing the cases we should review the {\it positively monotonic} operators \cite{KupfermanV03}. According to MITL semantics, $f\in\{\Box_I,\Diamond_I\}$ and $g\in\{\wedge,\vee\}$ are positively monotonic, i.e. for every MITL formulas $\varphi_1$ and $\varphi_2$ in NNF with $\varphi_1\implies\varphi_2$, we have $f(\varphi_1)\implies f(\varphi_2)$. Also, for all MITL formulas $\varphi'$ in NNF, we have $g(\varphi_1,\varphi')\implies g(\varphi_2,\varphi')$ and $g(\varphi',\varphi_1)\implies g(\varphi',\varphi_2)$.
\\{\bf Case 1:}  $\varphi_i=f(\varphi_j)$ where $f\in\{\Box_I,\Diamond_I\}$.
 Since $f$ is positively monotonic, we have that $\varphi_j[l\leftarrow\perp]\implies\varphi_j$ implies  $f(\varphi_j[l\leftarrow\perp])\implies f(\varphi_j)$. Thus,\\ $f(\varphi_j)[l\leftarrow\perp]=f(\varphi_j[l\leftarrow\perp])\implies f(\varphi_j)=\varphi_i$. As a result $\varphi_i[l\leftarrow\perp]\implies\varphi_i$.
\\{\bf Case 2:}  $\varphi_i=g(\varphi_{j_1},\varphi_{j_2})$ where $g\in\{\wedge,\vee\}$
 Since $g$ is positively monotonic, we have that $\varphi_{j_1}[l\leftarrow\perp]\implies\varphi_{j_1}$, and $\varphi_{j_2}[l\leftarrow\perp]\implies\varphi_{j2}$ implies\\ $g(\varphi_{j_1}[l\leftarrow\perp],\varphi_{j_2}[l\leftarrow\perp])\implies g(\varphi_{j_1},\varphi_{j_2})$ . Thus,  $g(\varphi_{j_1},\varphi_{j_2})[l\leftarrow\perp]=g(\varphi_{j_1}[l\leftarrow\perp],\varphi_{j_2}[l\leftarrow\perp])\implies g(\varphi_{j_1},\varphi_{j_2})=\varphi_i$. As a result $\varphi_i[l\leftarrow\perp]\implies\varphi_i$. 

Since $\varphi_i[l\leftarrow\perp]\implies\varphi_i$ we can have:\\ $\{\Phi\backslash\varphi_i\}\cup\varphi_i[l\leftarrow\perp]\implies\{\Phi\backslash\varphi_i\}\cup\varphi_i$ which is equivalent to \\ $\{\Phi\backslash\varphi_i\}\cup\varphi_i[l\leftarrow\perp]\implies\Phi$ 
\end{proof}

\section{ Proof of Theorem \ref{thm:sigVacuity}}
\label{app:SV}
In this section, we will prove that any MITL (STL) $\varphi\in\Phi$, which contains a disjunction operation ($\vee$) in NNF can be satisfied \yhl{by partially covering signals.} In other words, we will prove that any timed trace (signal) which satisfies $\varphi$ will be considered as a \yhl{partially covering} timed trace (signal) according to Algorithm \ref{alg:litOcRe}. Without loss of generality we assume that both operands of disjunction are not constant. This is because if one of the operands is equivalent to $\top$ or $\perp$, then the disjunction can be semantically removed as follows $\psi\vee\top\equiv\top$ or $\psi\vee\perp\equiv\psi$ for any MITL (STL) $\psi$.

Let us consider a the \yhl{partially covering} timed trace (signal) returned by Algorithm \ref{alg:litOcRe}. 
If there exist $\varphi_i\in\Phi$ and $l\in litOccur(\varphi_i)$ such that the timed trace $\tss$ satisfies $\varphi_i[l\leftarrow\perp]$, then $\tss$ will be reported as a \yhl{partially covering} timed trace (signal). Recall that we assume that $\Phi$ is a conjunction of MITL specifications according to Equation (\ref{eq:conjunction}). We also assume that the conjunct $\varphi_i\in\Phi$ is the MITL subformula that contains the disjunction operation. Namely, that $\psi=\psi_1\vee\psi_2$ is a subformula of $\varphi_i$. 

\begin{theorem}
Any timed trace $\tss$ that satisfies $\varphi_i$ will satisfy $\varphi_i[l\leftarrow\perp]$ for some $l\in litOccur(\varphi_i)$.
\end{theorem}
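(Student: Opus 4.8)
The plan is to prove a pointwise strengthening of the claim by structural induction on $\varphi_i$ and then specialize to time $0$. Concretely, I would establish that for every NNF subformula $\chi$ of $\varphi_i$ that contains a disjunction and for every $t\in[0,T]$, if $(\tss,t)\models\chi$ then there is a literal occurrence $l\in litOccur(\chi)$ with $(\tss,t)\models\chi[l\leftarrow\perp]$; taking $\chi=\varphi_i$ and $t=0$ then yields the theorem. Throughout I would reuse the facts established in the proof of Theorem~\ref{thm:vacu}: substituting $\perp$ for a single literal occurrence only strengthens an NNF formula ($\chi[l\leftarrow\perp]\implies\chi$), and $\Box_I,\Diamond_I,\wedge,\vee$ are positively monotonic. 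This guarantees that any satisfaction I establish for a mutant is genuine and lets me propagate local facts up the parse tree.

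The base and easy inductive cases follow the disjunct-killing idea. If $\chi=\psi_1\vee\psi_2$ and $(\tss,t)\models\chi$, then some disjunct holds at $t$, say $\psi_1$; choosing any $l\in litOccur(\psi_2)$ gives $\chi[l\leftarrow\perp]=\psi_1\vee\psi_2[l\leftarrow\perp]$, which still holds at $t$ because $\psi_1$ does (both operands are assumed non-constant, so $litOccur(\psi_2)\neq\emptyset$). For $\chi=\chi_1\wedge\chi_2$ with the disjunction inside $\chi_1$, I apply the induction hypothesis to $\chi_1$ at $t$ to obtain $l\in litOccur(\chi_1)$; since $l$ does not occur in $\chi_2$, the mutant is $\chi_1[l\leftarrow\perp]\wedge\chi_2$ and both conjuncts hold at $t$. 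The existential temporal case $\chi=\Diamond_I\chi_1$ is equally clean: $(\tss,t)\models\Diamond_I\chi_1$ supplies a witness $t^\ast\in(t+I)\cap[0,T]$ with $(\tss,t^\ast)\models\chi_1$, the induction hypothesis at $t^\ast$ produces a single $l$ with $(\tss,t^\ast)\models\chi_1[l\leftarrow\perp]$, and that same $t^\ast$ witnesses $(\tss,t)\models\Diamond_I(\chi_1[l\leftarrow\perp])$.

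The hard part will be the universal temporal case $\chi=\Box_I\chi_1$. Here $(\tss,t)\models\Box_I\chi_1$ only tells me that for every $t'\in(t+I)\cap[0,T]$ some mutant literal works at $t'$, but the induction hypothesis yields a witness that may depend on $t'$, whereas the theorem demands one global literal occurrence. A single fixed substitution that disables the ``wrong'' disjunct at one instant can destroy satisfaction at another instant where a different disjunct was the one carrying $\chi_1$; the schematic $\Box_{[0,2]}(a\vee b)$, with $a$ true only on $[0,1]$ and $b$ true only on $[1,2]$, shows that neither $\Box_{[0,2]}a$ nor $\Box_{[0,2]}b$ holds even though $\Box_{[0,2]}(a\vee b)$ does. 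I would therefore try to isolate exactly the structural situations in which a uniform choice of $l$ exists --- for instance when the disjunction's effective interval (as computed by Algorithm~\ref{alg:AIcom}) collapses to a point, or when the disjunction does not sit beneath a universal operator, which is precisely the regime in which Algorithm~\ref{alg:litOcRe} is meant to be deployed --- and, where needed, invoke the finite-variability assumption to split $(t+I)$ into finitely many constant segments before arguing that one occurrence suffices. Pinning down this uniformity, rather than the monotonicity bookkeeping, is where the real content of the proof lies.
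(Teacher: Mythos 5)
Your proposal stalls at the $\Box_I$ case, and it is worth being explicit that this is not a defect of your approach relative to the paper's: the paper's proof does not solve that case, it silently steps over it. The paper argues by two cases on the designated disjunction $\psi=\psi_1\vee\psi_2$ inside $\varphi_i$. Case 1 ($\psi$ false at every time) is fine modulo a compositionality remark. In Case 2 it picks a single time $t$ at which $\psi$ holds, kills a literal occurrence in the disjunct that is not active at $t$, concludes $(\tss,t)\models\psi[l\leftarrow\perp]$, and then asserts ``accordingly $\tss\models\varphi_i[l\leftarrow\perp]$.'' That final inference is exactly the uniformity step you identified as the real content of the problem: truth of the mutated disjunction at one instant does not lift to truth of the mutated formula when $\psi$ lies in the scope of a universal temporal operator, because different instants of the $\Box_I$-window may be carried by different disjuncts. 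Your schematic counterexample --- $\Box_{[0,2]}(a\vee b)$ with $a$ true exactly on $[0,1]$ and $b$ true exactly on $[1,2]$ --- satisfies the theorem's hypotheses (NNF, contains a disjunction, finitely variable trace) yet satisfies neither mutation $\Box_{[0,2]}(a\vee\perp)$ nor $\Box_{[0,2]}(\perp\vee b)$. So it refutes the statement itself, not merely one proof of it; neither your induction nor the paper's case split can be completed without additional hypotheses.

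As for the parts you did carry out: your structural induction is sound and strictly more careful than the paper's non-inductive argument. The $\vee$, $\wedge$, and $\Diamond_I$ steps are correct (the $\Diamond_I$ step goes through precisely because a single witness time suffices there, which is the property that fails under $\Box_I$), and your proposed repairs --- requiring that no disjunction occur in the scope of a universal operator, or weakening the conclusion to a pointwise/single-time statement --- are the right kind of restriction under which the claim becomes provable by exactly your induction. One caution: invoking finite variability will not rescue the general case, since your own counterexample has only finitely many switches; any fix must be structural (where the disjunction sits in the parse tree), not an assumption on the trace.
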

\begin{proof}
We have two cases for $\tss\models\varphi_i$ and $\psi\in\varphi_i$ where $\psi=\psi_1\vee\psi_2$:
\begin{enumerate}
	\item $\forall t,(\tss,t)\not\models\psi$: In this case $\psi$ does not affect the satisfaction of $\tss\models\varphi_i$. If we choose $l'\in litOccur(\psi)$, then $\psi[l'\leftarrow\perp]$ also does not affect the satisfaction of $\varphi_i$.
	\item $\exists t,\mbox{ s.t. }(\tss,t)\models\psi$: In this case $\psi$ affects the satisfaction of $\varphi_i$. 
	So either $(\tss,t)\models\psi_1$ or $(\tss,t)\models\psi_2$. 
	If $(\tss,t)\models\psi_1$ then we can choose $l'\in litOccur(\psi_2)$ and we have $(\tss,t)\models\psi_1\vee\psi_2[l'\leftarrow\perp]$. 
	Similarly, if $(\tss,t)\models\psi_2$ then we can choose $l''\in litOccur(\psi_1)$ and we have $(\tss,t)\models\psi_1[l''\leftarrow\perp]\vee\psi_2$. As a result, there exists some $l\in litOccur(\psi)$ where $(\tss,t)\models\psi[l\leftarrow\perp]$ and accordingly $\tss\models\varphi_i[l\leftarrow\perp]$.
\end{enumerate}

Finally, $\forall\tss, \tss\models\varphi_i\mbox{ }\exists l\in litOccur(\varphi_i)\mbox{ s.t. }\tss\models\varphi_i[l\leftarrow\perp]$. Which means that $\tss$ is a \yhl{partially covering} timed trace (signal).\end{proof}

\begin{corollary}
	\label{cor:con}
	Assume that the conjunct $\varphi_j\in\Phi$ is the subformula that contains the conjunction operation in NNF. Namely, that $\psi=\psi_1\wedge\psi_2$ is a subformula of $\varphi_j$. Any timed trace $\tss$ that falsifies $\varphi_j$ will falsify $\varphi_j[l\leftarrow\top]$ for some $l\in litOccur(\varphi_j)$.
\end{corollary}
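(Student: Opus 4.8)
The plan is to obtain Corollary \ref{cor:con} from Theorem \ref{thm:sigVacuity} by a duality argument, using the fact that falsifying $\varphi_j$ is the same as satisfying its NNF negation, and that a conjunction becomes a disjunction under NNF negation. First I would introduce the NNF-negation operator $\overline{\cdot}$, defined inductively by $\overline{a}=\neg a$, $\overline{\neg a}=a$, $\overline{\psi_1\wedge\psi_2}=\overline{\psi_1}\vee\overline{\psi_2}$, $\overline{\psi_1\vee\psi_2}=\overline{\psi_1}\wedge\overline{\psi_2}$, $\overline{\Box_I\psi}=\Diamond_I\overline{\psi}$, and $\overline{\Diamond_I\psi}=\Box_I\overline{\psi}$. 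By the De Morgan laws recorded after Definition \ref{def:mitl}, $\overline{\varphi_j}$ is again in NNF and satisfies $\tss\models\overline{\varphi_j}$ iff $\tss\not\models\varphi_j$; in particular the hypothesis that $\tss$ falsifies $\varphi_j$ is exactly $\tss\models\overline{\varphi_j}$.

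Next I would observe that the conjunctive subformula $\psi=\psi_1\wedge\psi_2$ of $\varphi_j$ is sent to the disjunctive subformula $\overline{\psi}=\overline{\psi_1}\vee\overline{\psi_2}$ of $\overline{\varphi_j}$. As in the proof of Theorem \ref{thm:sigVacuity}, I assume without loss of generality that neither $\psi_1$ nor $\psi_2$ is constant (otherwise the conjunction simplifies away and there is nothing to prove); then neither $\overline{\psi_1}$ nor $\overline{\psi_2}$ is constant, so $\overline{\varphi_j}$ genuinely contains a disjunction. Applying Theorem \ref{thm:sigVacuity} to $\overline{\varphi_j}$ with the trace $\tss\models\overline{\varphi_j}$, taking $\overline{\psi}$ as the witnessing disjunction, yields a literal occurrence $l'\in litOccur(\overline{\psi})$ with $\tss\models\overline{\varphi_j}[l'\leftarrow\perp]$.

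The crux is to translate this mutation back across the negation. Each literal occurrence of $\varphi_j$ corresponds bijectively to a literal occurrence of $\overline{\varphi_j}$, with a literal $l$ of $\varphi_j$ sitting where $\overline{l}$ sits in $\overline{\varphi_j}$; I write $l$ for the preimage of $l'$, so $l\in litOccur(\psi)\subseteq litOccur(\varphi_j)$. I would then verify the substitution/negation identity $\overline{\varphi_j}[\overline{l}\leftarrow\perp]\equiv\overline{\varphi_j[l\leftarrow\top]}$, which holds because $\overline{\cdot}$ is a syntactic homomorphism and $\overline{\top}=\perp$: replacing $l$ by $\top$ inside $\varphi_j$ and then negating produces precisely the formula obtained by replacing $\overline{l}$ by $\perp$ inside $\overline{\varphi_j}$. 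Combining this with the previous step gives $\tss\models\overline{\varphi_j[l\leftarrow\top]}$, i.e. $\tss\not\models\varphi_j[l\leftarrow\top]$, so $\tss$ falsifies $\varphi_j[l\leftarrow\top]$, which is the claim.

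The main obstacle I anticipate is making the literal-occurrence correspondence and the commutation identity fully rigorous at the parse-tree level, since $litOccur$ is a position-indexed multiset and one must track positions carefully through the map $\overline{\cdot}$; a short structural induction on $\varphi_j$, paralleling the one used for Theorem \ref{thm:sigVacuity}, establishes $\overline{\varphi_j}[\overline{l}\leftarrow\perp]\equiv\overline{\varphi_j[l\leftarrow\top]}$ and closes this gap. An alternative route avoids duality and argues directly by the same two-case analysis as the theorem, now dualized to split on whether $\psi$ holds throughout rather than fails throughout and using positive monotonicity of $\{\Box_I,\Diamond_I,\wedge,\vee\}$; this reproves rather than reuses Theorem \ref{thm:sigVacuity}, so I prefer the shorter duality argument.
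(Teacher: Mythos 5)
Your proof is correct, and it supplies a derivation that the paper itself omits: Corollary \ref{cor:con} is stated immediately after the proof of Theorem \ref{thm:sigVacuity} with no argument at all, the implicit justification being exactly the duality you spell out. Your three ingredients are all sound: (i) $\tss\not\models\varphi_j$ iff $\tss\models\overline{\varphi_j}$, which follows by structural induction from the De Morgan equivalences the paper records after Definition \ref{def:mitl}; (ii) the NNF-negation $\overline{\cdot}$ is a parse-tree homomorphism, so the conjunctive subformula $\psi_1\wedge\psi_2$ of $\varphi_j$ becomes a disjunctive subformula of $\overline{\varphi_j}$ with literal occurrences in bijective correspondence; and (iii) the commutation identity $\overline{\varphi_j[l\leftarrow\top]}\equiv\overline{\varphi_j}[\overline{l}\leftarrow\perp]$, which holds because $\overline{\top}=\perp$ and the substitution touches a single leaf of the tree. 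One caveat you correctly inherit from the paper: the non-constant-operand assumption is not merely convenient but necessary (for $\varphi_j=p\wedge\top$, a trace with $\tss\not\models p$ falsifies $\varphi_j$ yet satisfies the only mutation $\varphi_j[p\leftarrow\top]\equiv\top$), and your WLOG mirrors the identical assumption in the paper's proof of the theorem, so this is not a gap relative to the paper's own standard of rigor, though your phrase ``nothing to prove'' glosses over the fact that the degenerate case should really be handled by simplifying the formula first. The alternative you mention --- re-running the two-case analysis with satisfaction/falsification and $\vee$/$\wedge$ swapped, splitting on whether $\psi$ holds at all times or fails at some time --- is the other natural route and is probably what the authors had in mind; your choice to reuse the theorem via duality is shorter and arguably cleaner, since it makes the corollary a genuine corollary rather than a parallel theorem with a duplicated proof.
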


\section{Proofs of Theorem \ref{thm:MITL2LTL}}
\label{app:MITLvsLTL}
We consider two MITL($\Diamond$,$\Box$) fragments, denoted MITL($\Box$), and MITL($\Diamond$). In this proof we assume that all formulas are in NNF. We also consider LTL($\Diamond$,$\Box$) as the set of LTL formulas (with continuous semantics) that contains only $\Diamond$ and $\Box$ as temporal operators. In the following we provide the continuous semantics of LTL($\Diamond$,$\Box$) over traces with bounded duration. Semantics of LTL($\Diamond$,$\Box$) over bounded timed traces can be defined as follows:
\begin{definition}[LTL($\Diamond$,$\Box$) continuous semantics] Given a timed trace $\tss : [0,T] \rightarrow 2^{AP}$  and $t,t' \in \Re$, and an LTL($\Diamond$,$\Box$) formula $\phi$, the satisfaction relation $(\tss,t) \vDash \phi$ for temporal operators is inductively defined: 
	\label{def:ltl}
	\begin{itemize}
		\item[] $(\tss,t) \vDash  \Diamond\phi_1$ iff $\exists t'\in [t,T]$ s.t $(\tss,t') \vDash  \phi_1$.
		\item[] $(\tss,t) \vDash  \Box\phi_1$ iff $\forall t'\in [t,T]$,  $(\tss,t') \vDash \phi_1$.
	\end{itemize}
	
\end{definition} 
We will consider two LTL($\Diamond$,$\Box$) fragments denoted LTL($\Box$), and LTL($\Diamond$). The syntax of MITL and LTL fragments are as presented in \yhl{Section }\ref{ltlsat}.
We define the operator $[\phi]_{LTL}$ which can be applied to any MITL($\Diamond$,$\Box$) formula and removes its interval constraints to create a new formula in LTL($\Diamond$,$\Box$). For example if $\phi=\Diamond_{[0,10]}(p\wedge q)\wedge\Diamond_{[0,10]} p\wedge\Box_{[0,10]} q$, then $[\phi]_{LTL}=\Diamond(p\wedge q)\wedge\Diamond p\wedge\Box q$. As a result, for any $\phi\in$ MITL$(\Diamond,\Box)$ there exists a $\psi\in$ LTL$(\Diamond,\Box)$ where $\psi=[\phi]_{LTL}$. For each MITL$(\Diamond,\Box)$ formula $\phi$, the language of $\phi$ denoted $L(\phi)$ is the set of all timed traces that satisfy $\phi$:
$\tss \vDash \phi$ iff $\tss\in L(\phi)$. Similarly, for any $\psi\in$ LTL$(\Diamond,\Box)$, the language of $\psi$ denoted $L(\psi)$ is the set of all timed traces that satisfy $\psi$:
$\tss' \vDash \psi$ iff $\tss'\in L(\psi)$. Based on set theory, it is trivial to prove that $A\subseteq B$ and $C\subseteq D$ implies  $A\cup C \subseteq B\cup D$ and $A\cap C \subseteq B\cap D$.
\begin{theorem}
	\label{thm:diam}
	For any formula $\varphi\in$ MITL($\Diamond$), and $t\in[0,T]$ we have $L_t(\varphi)\subseteq L_t([\varphi]_{LTL})$ where $L_t(\varphi)=\{\tss\; | \;(\tss,t)\vDash\varphi\}$. In other words\yhl{, for every timed trace $\tss$,} we have $(\tss,t)\vDash\varphi$ implies $(\tss,t)\vDash[\varphi]_{LTL}$.
\end{theorem}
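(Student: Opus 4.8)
The plan is to prove the inclusion by structural induction on $\varphi\in$ MITL($\Diamond$), whose grammar builds every formula from the literals $\top,\bot,p,\neg p$ using $\wedge$, $\vee$, and $\Diamond_I$. The operator $[\cdot]_{LTL}$ acts homomorphically on this grammar: it fixes literals and rewrites each $\Diamond_I$ as $\Diamond$, so that $[\varphi_1\wedge\varphi_2]_{LTL}=[\varphi_1]_{LTL}\wedge[\varphi_2]_{LTL}$, $[\varphi_1\vee\varphi_2]_{LTL}=[\varphi_1]_{LTL}\vee[\varphi_2]_{LTL}$, and $[\Diamond_I\varphi_1]_{LTL}=\Diamond[\varphi_1]_{LTL}$. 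This makes the induction track the recursive definitions of $\vDash$ exactly. I would phrase the claim in the language-set form $L_t(\varphi)\subseteq L_t([\varphi]_{LTL})$ so that the elementary monotonicity facts noted just before the theorem ($A\subseteq B$ and $C\subseteq D$ imply $A\cup C\subseteq B\cup D$ and $A\cap C\subseteq B\cap D$) are available for the connective cases.

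For the base cases $\varphi\in\{\top,\bot,p,\neg p\}$ there is nothing to strip, so $[\varphi]_{LTL}=\varphi$ and the inclusion is an equality. For $\varphi=\varphi_1\wedge\varphi_2$ I would use $L_t(\varphi)=L_t(\varphi_1)\cap L_t(\varphi_2)$, apply the induction hypothesis to each conjunct, and close with the intersection-monotonicity fact; the case $\varphi=\varphi_1\vee\varphi_2$ is identical with unions.

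The only case with real content is $\varphi=\Diamond_I\varphi_1$, and it hinges on a containment of witness sets. If $(\tss,t)\vDash\Diamond_I\varphi_1$, then by Definition \ref{def:mitl} some $t'\in(t+I)\cap[0,T]$ satisfies $(\tss,t')\vDash\varphi_1$. Since $I$ is an interval over $\Qe_+$ its endpoints are nonnegative, so $t+I\subseteq[t,\infty)$ and therefore $(t+I)\cap[0,T]\subseteq[t,T]$; in particular $t'\in[t,T]$, which is exactly the range required by the LTL eventually in Definition \ref{def:ltl}. Applying the induction hypothesis at time $t'$ gives $(\tss,t')\vDash[\varphi_1]_{LTL}$, hence $(\tss,t)\vDash\Diamond[\varphi_1]_{LTL}=[\Diamond_I\varphi_1]_{LTL}$, completing the step.

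I expect the witness-set inclusion $(t+I)\cap[0,T]\subseteq[t,T]$ to be the single point needing care rather than a genuine obstacle: it rests precisely on the nonnegativity of the interval bounds guaranteed by the $\Qe_+$ assumption in Definition \ref{def:syn}, and any open/closed boundary behavior is irrelevant since the inclusion holds for intervals of either kind. Specializing the finished result to $t=0$ yields $L(\varphi)\subseteq L([\varphi]_{LTL})$, which is the containment behind the first claim of Theorem \ref{thm:MITL2LTL}: if $[\varphi]_{LTL}$ is unsatisfiable then so is $\varphi$.
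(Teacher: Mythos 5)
Your proof is correct and follows essentially the same route as the paper's: structural induction with the same base cases, the same set-theoretic treatment of $\wedge$ and $\vee$ via $L_t(\varphi_1\wedge\varphi_2)=L_t(\varphi_1)\cap L_t(\varphi_2)$ and $L_t(\varphi_1\vee\varphi_2)=L_t(\varphi_1)\cup L_t(\varphi_2)$, and the same key observation in the $\Diamond_I$ case that $(t+I)\cap[0,T]\subseteq[t,T]$ because the left endpoint of $I$ is nonnegative, after which the induction hypothesis at the witness time $t'$ closes the argument. Your added remark that specializing to $t=0$ yields the unsatisfiability transfer in Theorem \ref{thm:MITL2LTL} matches the corollary the paper draws immediately after its proof.
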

\begin{proof}
	We use structural induction to prove that $L_t(\varphi)\subseteq L_t([\varphi]_{LTL})$\\
	{\bf Base Case:} if $\varphi=\top,\bot, p, \neg p$, then $[\varphi]_{LTL}=\varphi$ and $L_t(\varphi)\subseteq L_t([\varphi]_{LTL})$\\
	{\bf Induction Hypothesis:} We assume that there exist $\varphi_1,\varphi_2\in$ MITL($\Diamond$) where for all $t\in[0,T]$, $L_t(\varphi_1)\subseteq L_t([\varphi_1]_{LTL})$ and $L_t(\varphi_2)\subseteq L_t([\varphi_2]_{LTL})$
	
	{\bf Case 1:} For Binary operators $\wedge,\vee$ we can use the union and intersection properties. In essence, for all formulas $\varphi_1,\varphi_2$ we have $L_t(\varphi_1\vee\varphi_2)=L_t(\varphi_1)\cup L_t(\varphi_2)$ and $L_t(\varphi_1\wedge\varphi_2)=L_t(\varphi_1)\cap L_t(\varphi_2)$. According to the IH $L_t(\varphi_1)\subseteq L_t([\varphi_1]_{LTL})$ and $L_t(\varphi_2)\subseteq L_t([\varphi_2]_{LTL})$; therefore, $L_t(\varphi_1)\cap L_t(\varphi_2)\subseteq L_t([\varphi_1]_{LTL})\cap L_t([\varphi_2]_{LTL})$ and $L_t(\varphi_1)\cup L_t(\varphi_2)\subseteq L_t([\varphi_1]_{LTL})\cup L_t([\varphi_2]_{LTL})$. As a result, $L_t(\varphi_1\wedge\varphi_2)\subseteq L_t([\varphi_1]_{LTL}\wedge[\varphi_2]_{LTL})=L_t([\varphi_1\wedge\varphi_2]_{LTL})$, and $L_t(\varphi_1\vee\varphi_2)\subseteq L_t([\varphi_1]_{LTL}\vee[\varphi_2]_{LTL})=L_t([\varphi_1\vee\varphi_2]_{LTL})$.
	
	{\bf Case 2:} For the temporal operator $\Diamond$, we need to compare the semantics of MITL($\Diamond$) and LTL($\Diamond$). Recall that
	
	$(\tss,t) \vDash  \Diamond_I\varphi_1$ iff $\exists t' \in (t + I) \cap [0,T]$ s.t $(\tss,t') \vDash  \varphi_1$.
	
	$(\tss,t) \vDash  \Diamond\varphi_1$ iff $\exists t'\in [t,T]$ s.t $(\tss,t') \vDash  \varphi_1$.\\
	Recall that $t'' \in (t + I) \cap [0,T]$ implies $t'' \in [t,T]$ since the left bound of $I$ is nonnegative.
	
	According to the semantics, $\forall\tss$.$(\tss,t) \vDash  \Diamond_I\varphi_1$ implies \\$ \exists t' \in (t + I) \cap [0,T]$ s.t $(\tss,t') \vDash  \varphi_1$ implies \\$\exists t' \in (t + I) \cap [0,T]$ s.t $ (\tss,t') \vDash  [\varphi_1]_{LTL}$ according to IH ($L_{t'}(\varphi_1)\subseteq L_{t'}([\varphi_1]_{LTL})$).\\
	If $\exists t' \in (t + I) \cap [0,T]$ s.t $ (\tss,t') \vDash  [\varphi_1]_{LTL}$ then \\
	$\exists t' \in [t,T]$ s.t $ (\tss,t') \vDash  [\varphi_1]_{LTL}$ since  $t' \in (t + I) \cap [0,T]$ implies $t' \in [t,T]$.
	
	Moreover, $(\tss,t') \vDash  [\varphi_1]_{LTL}$ implies that $ (\tss,t) \vDash  \Diamond[\varphi_1]_{LTL}\equiv[\Diamond\varphi_1]_{LTL}$.\\ As a result, 
	$\forall \tss$. $(\tss,t) \vDash  \Diamond_I\varphi_1 \implies (\tss,t) \vDash[\Diamond\varphi_1]_{LTL}$ so $L_t(\Diamond_I\varphi_1)\subseteq L_t([\Diamond\varphi_1]_{LTL})$.
\end{proof}
If $\varphi\in$ MITL($\Diamond$) then $\overline{L_t([\varphi]_{LTL})}\subseteq\overline{L_t(\varphi)}$ (immediate from set theory). Thus, for all timed traces $\tss$, $\tss\not\vDash[\varphi]_{LTL}$ implies that $\tss\not\vDash\varphi$.
\begin{corollary}
	For any $\varphi\in$ MITL($\Diamond$), if $[\varphi]_{LTL}\in$ LTL($\Diamond$) is unsatisfiable, then $\varphi$ is unsatisfiable.
\end{corollary}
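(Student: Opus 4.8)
The plan is to derive this as an immediate consequence of Theorem \ref{thm:diam}, which has already established the pointwise language containment $L_t(\varphi) \subseteq L_t([\varphi]_{LTL})$ for every $t \in [0,T]$. The only additional ingredient I need is the observation that satisfiability of an MITL (resp.\ LTL) formula is decided at time zero: by definition $\tss \vDash \varphi$ iff $(\tss,0) \vDash \varphi$, so a formula $\chi$ is unsatisfiable exactly when $L_0(\chi) = \emptyset$. Making this identification explicit is what lets me convert the semantic inclusion of Theorem \ref{thm:diam} into a statement about (un)satisfiability.

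First I would instantiate Theorem \ref{thm:diam} at $t = 0$, obtaining $L_0(\varphi) \subseteq L_0([\varphi]_{LTL})$. Then I would invoke the hypothesis that $[\varphi]_{LTL}$ is unsatisfiable, which gives $L_0([\varphi]_{LTL}) = \emptyset$. Combining the two yields $L_0(\varphi) \subseteq \emptyset$, hence $L_0(\varphi) = \emptyset$; that is, no timed trace $\tss$ satisfies $(\tss,0) \vDash \varphi$, so $\varphi$ is unsatisfiable. Equivalently, one may phrase the argument contrapositively using the complemented containment $\overline{L_0([\varphi]_{LTL})} \subseteq \overline{L_0(\varphi)}$ recorded just above the statement: any witness $\tss \vDash \varphi$ would, by Theorem \ref{thm:diam}, also satisfy $\tss \vDash [\varphi]_{LTL}$, contradicting unsatisfiability of the latter.

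There is no genuine obstacle at the level of the corollary itself, since all of the work resides in Theorem \ref{thm:diam}, and in particular in its $\Diamond$ case, where the inclusion relies on the fact that $(t + I) \cap [0,T] \subseteq [t,T]$ because every interval $I$ has a nonnegative left endpoint. The corollary merely transports that pointwise-in-$t$ inclusion to the global notion of satisfiability by specializing to the root time $t = 0$. The one point I would state carefully is that unsatisfiability means $L_0 = \emptyset$ rather than $L_t = \emptyset$ for arbitrary $t$, so that the instantiation at the initial time of the trace is indeed the relevant one; everything else is a direct application of set containment.
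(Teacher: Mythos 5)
Your proposal is correct and matches the paper's own argument: the paper also derives the corollary directly from Theorem \ref{thm:diam} by passing to complements, $\overline{L_t([\varphi]_{LTL})}\subseteq\overline{L_t(\varphi)}$, so that any trace satisfying $\varphi$ would satisfy $[\varphi]_{LTL}$, contradicting its unsatisfiability. Your explicit specialization to $t=0$ (since satisfiability is defined via $(\tss,0)\vDash\varphi$) is a careful touch the paper leaves implicit, but it is the same proof.
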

\begin{theorem}
	\label{thm:box}
	For any formula $\varphi\in$ MITL($\Box$), and $t\in[0,T]$, we have $L_t([\varphi]_{LTL})\subseteq L_t(\varphi)$, where $L_t(\varphi)=\{\tss|(\tss,t)\vDash\varphi\}$. \yhl{In other words, $\forall \tss (\tss,t)\vDash[\varphi]_{LTL}\implies(\tss,t)\vDash\varphi$.}
\end{theorem}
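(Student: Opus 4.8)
The plan is to prove Theorem \ref{thm:box} by structural induction on $\varphi\in$ MITL($\Box$), dualizing the argument already used for Theorem \ref{thm:diam}. The single observation that drives the whole proof, and that explains why the inclusion reverses direction relative to the MITL($\Diamond$) case, is the following: for the $\Box$ operator the LTL semantics quantify \emph{universally} over the larger set $[t,T]\supseteq (t+I)\cap[0,T]$, so the LTL formula is the \emph{stronger} of the two. Consequently satisfaction of the LTL version forces satisfaction of the MITL version, which is exactly the containment $L_t([\varphi]_{LTL})\subseteq L_t(\varphi)$.

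For the base case, when $\varphi\in\{\top,\bot,p,\neg p\}$ we have $[\varphi]_{LTL}=\varphi$, hence $L_t([\varphi]_{LTL})=L_t(\varphi)$ and the inclusion holds trivially. The induction hypothesis assumes, for subformulas $\varphi_1,\varphi_2\in$ MITL($\Box$), that $L_t([\varphi_1]_{LTL})\subseteq L_t(\varphi_1)$ and $L_t([\varphi_2]_{LTL})\subseteq L_t(\varphi_2)$ for all $t\in[0,T]$. For the binary operators $\wedge,\vee$ I would reuse the same set-theoretic monotonicity facts invoked in Theorem \ref{thm:diam}: since $L_t(\varphi_1\vee\varphi_2)=L_t(\varphi_1)\cup L_t(\varphi_2)$ and $L_t(\varphi_1\wedge\varphi_2)=L_t(\varphi_1)\cap L_t(\varphi_2)$, and union and intersection both preserve inclusions, the induction hypothesis gives $L_t([\varphi_1\wedge\varphi_2]_{LTL})\subseteq L_t(\varphi_1\wedge\varphi_2)$ and likewise for $\vee$.

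The only genuinely new case is the temporal operator. I would take $\varphi=\Box_I\varphi_1$ and fix any $\tss$ with $(\tss,t)\vDash[\Box_I\varphi_1]_{LTL}=\Box[\varphi_1]_{LTL}$. By the LTL($\Box$) semantics this means $(\tss,t')\vDash[\varphi_1]_{LTL}$ for every $t'\in[t,T]$. Because the left endpoint of $I$ is nonnegative, every $t'\in (t+I)\cap[0,T]$ satisfies $t'\in[t,T]$; applying the induction hypothesis at each such $t'$ yields $(\tss,t')\vDash\varphi_1$ for all $t'\in (t+I)\cap[0,T]$, which is precisely $(\tss,t)\vDash\Box_I\varphi_1$. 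Hence $L_t(\Box[\varphi_1]_{LTL})\subseteq L_t(\Box_I\varphi_1)$, completing the induction.

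The proof is essentially routine, and I do not expect any serious obstacle; the only point that requires care is keeping the quantifier direction straight, since it is exactly the mirror image of the $\Diamond$ case. There the existential quantifier over the larger interval made the MITL side imply the LTL side, whereas here the universal quantifier over the larger interval makes the LTL side imply the MITL side. As an immediate consequence one obtains the companion corollary that satisfiability of $[\varphi]_{LTL}\in$ LTL($\Box$) implies satisfiability of $\varphi\in$ MITL($\Box$), supplying the $\Box$-fragment half of Theorem \ref{thm:MITL2LTL}.
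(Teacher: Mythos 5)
Your proof is correct and takes essentially the same approach as the paper, which simply states that Theorem \ref{thm:box} follows by structural induction analogous to Theorem \ref{thm:diam}; you have carried out exactly that dual induction, correctly reversing the inclusion via the containment $(t+I)\cap[0,T]\subseteq[t,T]$ under the universal quantifier of $\Box$.
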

\begin{proof}
	Similar to Theorem \ref{thm:diam}, we can apply structural induction for the proof of Theorem \ref{thm:box}.\end{proof}

\end{document}